\documentclass[11pt]{article}

\usepackage[utf8]{inputenc}
\usepackage[T1]{fontenc}

\usepackage{geometry}

\usepackage{amsthm, amsmath, amssymb}

\usepackage{float}
\usepackage{xcolor}
\usepackage{thmtools}
\usepackage{thm-restate}
\usepackage{enumerate}
\usepackage{hyperref}
\usepackage{graphicx}


\let\origR\R
\usepackage{complexity}
\let\R\origR

\newcommand{\Cref}[1]{\ref{#1}}
\newcommand{\citep}[1]{\cite{#1}}

\usepackage{tabularx}
\usepackage{environ}
\makeatletter
\newcommand{\problemtitle}[1]{\gdef\@problemtitle{#1}}%
\newcommand{\problemparameter}[1]{\gdef\@problemparameter{#1}}
\newcommand{\probleminput}[1]{\gdef\@probleminput{#1}}
\newcommand{\problemquestion}[1]{\gdef\@problemquestion{#1}}
\NewEnviron{problemgen}{
\problemtitle{}\problemparameter{}\probleminput{}\problemquestion{}
\BODY
\par\addvspace{.5\baselineskip}
\noindent
\begin{tabularx}{\textwidth}{@{\hspace{\parindent}} l X c}
    \multicolumn{2}{@{\hspace{\parindent}}l}{\@problemtitle} \\
    \textbf{Input:} & \@probleminput \\
    \textbf{Output:} & \@problemquestion
  \end{tabularx}
  \par\addvspace{.5\baselineskip}
}
\NewEnviron{problemdec}{
\problemtitle{}\problemparameter{}\probleminput{}\problemquestion{}
\BODY
\par\addvspace{.5\baselineskip}
\noindent
\begin{tabularx}{\textwidth}{@{\hspace{\parindent}} l X c}
    \multicolumn{2}{@{\hspace{\parindent}}l}{\@problemtitle} \\
    \textbf{Input:} & \@probleminput \\
    \textbf{Question:} & \@problemquestion
  \end{tabularx}
  \par\addvspace{.5\baselineskip}
}

\setlength{\marginparwidth}{2.5cm}
\usepackage[textsize=small, color=red!30]{todonotes}

\hypersetup{
  pdftitle = {Enumerating minimal solution sets for metric graph problems},
  pdfauthor = {Benjamin Bergougnoux, Oscar Defrain, and Fionn {Mc~Inerney}},
  colorlinks = true,
  linkcolor = black!30!red,
  citecolor = black!30!green
}

\newtheorem{theorem}{Theorem}[section]

\newtheorem{lemma}[theorem]{Lemma}
\newtheorem*{lemma*}{Lemma}

\newtheorem{question}[theorem]{Question}

\newtheorem{claim}[theorem]{\normalfont Claim}

\newcommand{\minres}{\textsc{MinResolving}}
\newcommand{\minstres}{\textsc{MinStrongResolving}} 
\newcommand{\mingeo}{\textsc{MinGeodetic}}

\newcommand{\defeq}{:=}
\newcommand{\vc}{\mathtt{vc}}

\newcommand{\ETH}{\textsf{ETH}}

\newcommand{\transenum}{\textsc{Trans-Enum}} 
 
\newcommand{\algoa}{{\sf A}} 
\newcommand{\algob}{{\sf B}} 

\newcommand{\Z}{\mathcal{Z}}

\newcommand{\h}{\mathcal{H}}

\DeclareMathOperator{\dist}{{\sf dist}}

\newcommand{\intv}[2]{\left \{ #1, \dots, #2 \right \}}

\renewenvironment{abstract}
{\small\vspace{-1em}
\begin{center}
\bfseries\abstractname\vspace{-.5em}\vspace{0pt}
\end{center}
\list{}{
\setlength{\leftmargin}{0.4in}%
\setlength{\rightmargin}{\leftmargin}}%
\item\relax}
{\endlist}

\title{%
Enumerating minimal solution sets\\ for metric graph problems\thanks{An extended abstract of this paper has been accepted at WG 2024~\cite{WG2024}.}}
\author{
Benjamin Bergougnoux\thanks{Institute of Informatics, University of Warsaw, Poland}\and
Oscar Defrain\thanks{LIS, Aix-Marseille Université, France}\and 
Fionn {Mc~Inerney}\thanks{Algorithms and Complexity Group, Technische Universit\"{a}t Wien, Austria}}

\date{June 06, 2024} 

\begin{document}

\maketitle

\vspace{-.5cm}
\begin{abstract}
        
Problems from metric graph theory like \textsc{Metric Dimension}, \textsc{Geodetic Set}, and \textsc{Strong Metric Dimension} have recently had a strong impact in parameterized complexity by being the first known problems in \NP\ to admit double-exponential lower bounds in the treewidth, and even in the vertex cover number for the latter, assuming the Exponential Time Hypothesis. 
We initiate the study of enumerating minimal solution sets for these problems and show that they are also of great interest in enumeration.
Specifically, we show that enumerating minimal resolving sets in graphs and minimal geodetic sets in split graphs are equivalent to enumerating minimal transversals in hypergraphs (denoted \textsc{Trans-Enum}), whose solvability in total-polynomial time is one of the most important open problems in algorithmic enumeration.
This provides two new natural examples to a question that emerged in recent works: for which vertex (or edge) set graph property $\Pi$ is the enumeration of minimal (or maximal) subsets satisfying $\Pi$ equivalent to \textsc{Trans-Enum}? As very few properties are known to fit within this context---namely, those related to minimal domination---our results make significant progress in characterizing such properties, and provide new angles to approach \textsc{Trans-Enum}.
In contrast, we observe that minimal strong resolving sets can be enumerated with polynomial delay.
Additionally, we consider cases where our reductions do not apply, namely graphs with no long induced paths, and show both positive and negative results related to the enumeration and extension of partial solutions.


\vskip5pt\noindent{}{\bf Keywords:} algorithmic enumeration, hypergraph dualization, metric dimension, geodetic sets, strong metric dimension, resolving sets, matroids.
\vskip5pt\noindent{}{\bf MSC codes:} 05C30.
\end{abstract}

\section{Introduction}

Metric graph theory is a central topic in mathematics and computer science that is the subject of many books and articles, with far-reaching applications such as in group theory~\cite{Gromov1987,Agol13}, matroid theory~\cite{BCK18}, learning theory~\cite{ChChMc22,ChChMoWa22,CKP22,CCMR23}, and computational biology~\cite{BD92}. Two very well-studied metric graph problems that arise in the context of network design and monitoring are the metric dimension~\cite{Slater75,HM76} and geodetic set~\cite{harary1993} problems. 
There is a rich literature concerning these problems and their variants, see, e.g.,~\cite{Slater87,KarpovskyCL98,sebo04,BGLM08,BMM+19}, with applications being found in areas like network verification~\cite{BEE+06}, chemistry~\cite{J93}, and genomics~\cite{TL19}. In particular, the strong metric dimension problem~\cite{sebo04} 
has begun to gather momentum. In this paper, we study the algorithmic enumeration of \emph{all} minimal solution sets for the metric dimension, geodetic set, and strong metric dimension problems.

In the {\sc Metric Dimension} problem, given an undirected and unweighted graph $G$ and a positive integer~$k$, the question is whether there exists a subset $S\subseteq V(G)$ of at most $k$ vertices such that, for any pair of vertices $u,v\in V(G)$, there exists a vertex $w\in S$ with $\dist(u,w)\neq \dist(v,w)$. A set of vertices $S\subseteq V(G)$ that satisfies the latter property is known as a \emph{resolving set} of $G$. 
This problem was shown to be \NP-complete in Garey and Johnson's book~\cite{DBLP:books/fm/GareyJ79}. In the last decade, its complexity was greatly refined, with it being proven to be \NP-hard in, e.g., 
planar graphs~\cite{DiazPSL17}, interval graphs 
of diameter~$2$~\cite{FoucaudMNPV17b}, (co-)bipartite graphs, 
and split graphs~\cite{ELW15}.
On the positive side, 
it is polynomial-time solvable in, e.g., trees~\cite{Slater75}, 
cographs~\cite{ELW15}, 
outerplanar graphs~\cite{DiazPSL17}, and graphs of bounded feedback edge number~\cite{E15}.
Further, it is in \FPT\ when parameterized by the max leaf number~\cite{E15}, 
the treelength plus maximum degree~\cite{BelmonteFGR17}, the distance to cluster (co-cluster, resp.)~\cite{GKIST23}, the treedepth, and the clique-width plus diameter~\cite{GHK22}.
In chordal graphs, it is also in \FPT\ when parameterized by the treewidth~\cite{BDP23}. In contrast, it is \W[2]-hard parameterized by the solution size~$k$~\cite{HartungN13}, \W[1]-hard parameterized by the pathwidth plus maximum degree~\cite{BP21} and the   pathwidth~\cite{GKIST23}, and para-\NP-hard parameterized by the pathwidth alone~\cite{LM22}. 

In the {\sc Geodetic Set} problem, given a graph $G$ and a positive integer $k$, the question is whether there exists a subset $S\subseteq V(G)$ of at most $k$ vertices such that every vertex in $G$ is on a shortest path between two vertices of $S$. A set of vertices $S\subseteq V(G)$ that satisfies the latter property is known as a \emph{geodetic set} of~$G$. 
It is \NP-complete in co-bipartite graphs~\cite{ekim2012}, interval graphs~\cite{floISAAC20}, 
and diameter-$2$ graphs~\cite{floCALDAM20}, but polynomial-time solvable in well-partitioned chordal graphs~\cite{wellpart}, 
outerplanar graphs~\cite{mezzini2018}, distance-hereditary graphs \cite{KN16}, 
and proper interval graphs~\cite{ekim2012}. 
Also, it is \W[1]-hard parameterized by the solution size plus feedback vertex number plus pathwidth, but 
in \FPT\ when parameterized by the treedepth, the clique-width plus diameter, and the feedback edge number~\cite{KK22}. 
It is also in \FPT\ when parameterized by the treewidth in chordal graphs~\cite{floISAAC20}.

In the {\sc Strong Metric Dimension} problem, given a graph $G$ and a positive integer $k$, the question is whether there exists a subset $S\subseteq V(G)$ of at most $k$ vertices such that, for any two distinct vertices $u,v, \in V(G)$, there exists a vertex $w\in S$ with either $u$ belonging to a shortest $w$--$v$ path or $v$ belonging to a shortest $w$--$u$ path. A set of vertices $S\subseteq V(G)$ that satisfies the latter property is known as a \emph{strong resolving set} of $G$. 
There is a polynomial-time reduction from an instance $(G,k)$ of {\sc Strong Metric Dimension} to an instance $(G',k)$ of {\sc Vertex Cover}, where $V(G)=V(G')$ and the edges of $G'$ connect pairs of vertices that are so-called ``mutually maximally distant'' in $G$~\cite{OellermannP07}. Due to its algorithmic implications, this relationship was further studied in~\cite{DM17,KPRY18}.

Recently, 
the three above problems were shown to be important well beyond the field of metric graph theory by being the first known problems in \NP\ to admit conditional double-exponential lower bounds in the treewidth, and even the vertex cover number~($\vc$) for {\sc Strong Metric Dimension}~\cite{FGKLMST23}. 
Further, the authors of~\cite{FGKLMST23} proved that, unless the Exponential Time Hypothesis (\ETH) fails, all three problems do not admit kernelization algorithms that output a kernel with $2^{o(\vc)}$ vertices.
Such kernelization lower bounds were priorly only known for two other problems~\cite{DBLP:journals/siamcomp/CyganPP16,DBLP:conf/iwpec/ChandranIK16}.
Lastly, they provided matching upper bounds in~\cite{FGKLMST23}, and
the lower bound technique they developed was used to obtain similar results for other \NP-complete problems~\cite{CCMR23,chakraborty2024tight}. 

Despite these problems being well-studied from an algorithmic complexity point of view, they have yet to be studied from the perspective of enumeration. We remedy this by initiating the study of enumerating minimal solution sets---the gold standard for enumeration---for the following problems.

\begin{problemgen}
  \problemtitle{\textsc Minimal Resolving Sets Enumeration (\minres{})}
  \probleminput{A graph $G$.}
  \problemquestion{The set of (inclusion-wise) minimal resolving sets of $G$.}
\end{problemgen}

\begin{problemgen}
  \problemtitle{\textsc Minimal Strong Resolving Sets Enumeration (\minstres{})}
  \probleminput{A graph $G$.}
  \problemquestion{The set of (inclusion-wise) minimal strong resolving sets of $G$.}
\end{problemgen}

\begin{problemgen}
  \problemtitle{\textsc Minimal Geodetic Sets Enumeration (\mingeo{})}
  \probleminput{A graph $G$.}
  \problemquestion{The set of (inclusion-wise) minimal geodetic sets of $G$.}
\end{problemgen}

In the same manner that these problems had a strong impact in parameterized complexity~\cite{FGKLMST23}, we show that they are also of great interest in enumeration. 
Specifically, they relate to classical problems such as the enumeration of maximal independent sets in graphs, which admits a polynomial-delay\footnote{The different notions from enumeration complexity are defined in Section~\ref{sec:prelims}.}  algorithm \cite{TsukiyamaIAS77}, and the enumeration of minimal transversals in hypergraphs, whose solvability in total-polynomial time is one of the most important open problems in algorithmic enumeration~\cite{eiter1995identifying,eiter2008computational}.

In the minimal transversals enumeration problem, denoted \transenum{} and also known as hypergraph dualization, given a hypergraph $\h$, the goal is to list all (inclusion-wise) minimal subsets of vertices that hit every edge of $\h$.
The best-known algorithm for \transenum{} runs in incremental
quasi-polynomial time by generating the $i^\text{th}$ minimal transversal of $\h$ in time $N^{o(\log N)}$, where $N=|\h|+i$~\cite{fredman1996complexity}.
Ever since, a lot of effort has been made to solve the problem in total-polynomial time in restricted cases.
Notably, there are polynomial-delay algorithms for $\beta$-acyclic hypergraphs~\cite{eiter1995identifying} and hypergraphs of bounded degeneracy~\cite{eiter2003new} or without small holes~\cite{kante2018holes}, and 
incremental-polynomial-time algorithms for bounded conformality hypergraphs \cite{khachiyan2007dualization} and geometric instances~\cite{elbassioni2019global}.

Due to the inherent difficult nature of the problem, and since no substantial progress has been made in the general case since~\cite{fredman1996complexity}, 
\transenum{} has gained the status of a ``landmark problem'' in terms of tractability, in between problems admitting total-polynomial-time algorithms, and those for which the existence of such algorithms is impossible unless $\P=\NP$.
This has motivated the study of particular cases of problems that are known to be 
at least as hard\footnote{%
An enumeration problem is \emph{at least as hard as} another enumeration problem if a total-polynomial-time algorithm for the first implies a total-polynomial-time algorithm for the second; the problems are (polynomially) \emph{equivalent} if the reverse direction also holds.} %
as \transenum{}; see, e.g.,~\cite{eiter1995identifying,kante2014split, conte2019listing}.
One of the most successful examples is the case of minimal dominating sets enumeration, with many particular cases shown to admit total-polynomial-time algorithms~\cite{kante2014split,kante2015chordal,golovach2018lmimwidth,bonamy2019kt}.
On the other hand, for problems that are notably harder than \transenum{} and for which the existence of total-sub-exponential-time algorithms is open, adapting the algorithm of \cite{fredman1996complexity} as in~\cite{elbassioni2009algorithms, elbassioni2022dualization}, or using it as a subroutine as in \cite{adaricheva2017discovery, defrain2021translating, defrain2020dualization} has also proved fruitful.

In light of these results, a line of research that emerged from \cite{kante2014split} (see also, e.g.,~\cite{conte2019maximal,strozecki2019survey}) consists of exploring the following question.
\begin{question}\label{qu:equivalence}
    For which vertex (or edge) set graph property $\Pi$ is the enumeration of minimal (or maximal) subsets satisfying $\Pi$ equivalent to \transenum{}?
\end{question}
We make progress on Question~\ref{qu:equivalence} by surprisingly proving that \transenum{} is equivalent to \minres{} (in general graphs) and \mingeo{} in split graphs. Notably, this adds two natural problems to the short list of those known to have this property.
Curiously, this contrasts with \minstres{} that we show can be solved with polynomial delay.

Interestingly, we additionally show that \mingeo{} is a particular case of enumerating the minimal flats of the graphic matroid associated to $K_n$ that are transversals of a given hypergraph. 
To the best of our knowledge, the latter problem is open, and thus, this encloses \mingeo{} by two generation problems whose complexity statuses are unsettled.
Hence, disproving the equivalence between \mingeo{} and \transenum{} by, e.g., showing that the first problem does not admit a total-polynomial-time algorithm unless $\P=\NP$,\footnote{As \transenum{} 
is in $\QP$ (quasi-polynomial time) and it is believed that $\NP\not\subseteq \QP$.}
would imply that the aforementioned variant of flats enumeration is intractable. 

Finally, we observe that the difficulty of the problems we study is tightly related to the maximum length of an induced path in the graph. This motivates the study of these problems on graphs that do not contain long induced paths with the aim of showing that it is not possible to get even more restricted graph classes for which \minres{} and \mingeo{} restricted to these graph classes are at least as hard as \transenum{}. While enumerating minimal geodetic and resolving sets is harder than \transenum{} on $P_5$ and $P_6$-free graphs, respectively, we show that they admit linear-delay algorithms in $P_4$-free graphs using a variant of Courcelle's theorem for enumeration and clique-width~\cite{Courcelle09}.

Concerning the difficulties and novelties of our techniques, our main results are based on several reductions which---unlike classical \NP-hardness reductions---preserve as much as possible the set of solutions of the original instance. For example, two main difficulties with our reduction from \transenum{} to \minres{} are to ensure that there are (1) at most a polynomial number of minimal resolving sets that do not correspond to minimal transversals, and (2) only a few minimal resolving sets corresponding to the same minimal transversal.

\section{Preliminaries}\label{sec:prelims}

For a positive integer $n$, let $[n]:=\{1,\ldots,n\}$.
In this paper, logarithms are binary.
We begin with definitions from enumeration complexity and refer the reader to~\cite{johnson1988generating,strozecki2019survey}. An algorithm runs in \emph{total-polynomial} time if it outputs every solution  exactly once and stops in a time which is polynomial in the size of the input plus the output.
Moreover, if the algorithm outputs the $i^\text{th}$ solution in a time which is polynomial in the size of the input plus $i$, then it runs in \emph{incremental-polynomial} time.
An enumeration algorithm runs with \emph{polynomial delay} if before the first output, between consecutive outputs, and after the last output it runs in a time which is polynomial in the size of the input.
Clearly, an algorithm running with polynomial delay runs in incremental-polynomial time, and an incremental-polynomial-time algorithm runs in total-polynomial time.
Moreover, reductions between enumeration problems (as described in Footnote~5) can sometimes preserve polynomial delay; this will be the case in this paper and we refer to~\cite{strozecki2019survey} for more details on these aspects.

For graphs and hypergraphs, see~\cite{diestel2012graph} and~\cite{berge1984hypergraphs} for definitions that are not recalled below.
A \emph{hypergraph} $\h$ is a set of vertices $V(\h)$ together with a family of edges $E(\h)\subseteq 2^{V(\h)}$.
It is a \emph{graph} when each of its edges has size precisely two, and \emph{Sperner} if no two distinct edges $E,F\in \h$ are such that $E\subseteq F$.
A \emph{transversal} of $\h$ is a subset $T\subseteq V(\h)$ such that $E\cap T\neq \emptyset$ for all $E\in E(\h)$.
It is \emph{minimal} if it is inclusion-wise minimal.
The set of minimal transversals of $\h$ is denoted by $Tr(\h)$, and the problem of listing $Tr(\h)$ (given $\h$) by \transenum{}.

Given a graph $G$ and two vertices $x,y$, $\dist(x,y)$ is the length of a shortest $x$-$y$ path in $G$.
Two vertices $u,v$ are \emph{false twins} if their open neighborhoods $N(u)$ and $N(v)$ are equal, and \emph{twins} if they are also adjacent.
For $k\in \mathbb{Z}^+$, we call $P_k$ a path on $k$ vertices.
We say that a vertex is \emph{complete} (\emph{anti-complete}, resp.) to a subset $S\subseteq V(G)$ if it is adjacent (non-adjacent, resp.) to each vertex in $S$; a set $A\subseteq V(G)$ is said to be complete to $S$ if every vertex $x\in A$ is.
For resolving sets, we say that a vertex $x$ \emph{distinguishes} a pair $a,b$ of vertices if $\dist(a,x)\neq \dist(b,x)$. 
For geodetic sets, we say that a pair $x,y$ of vertices \emph{covers} a vertex $v$ if $v$ lies on an $x$--$y$ shortest path.
We say that a (strong) resolving set (geodetic set, resp.) is minimal if it is inclusion-wise minimal.

Given a hypergraph $\h$ on vertex set $\{v_1,\dots,v_n\}$ and edge set $\{E_1,\dots,E_m\}$, the \emph{incidence bipartite graph} of $\h$ is the bipartite graph with bipartition $V=\{v_1,\dots,v_n\}$ and $H=\{e_1,\dots,e_m\}$, with an edge between $v_i\in V$ and $e_j\in H$ if $v_i$ belongs to~$E_j$.
The \emph{non-incidence bipartite graph} of $\h$ is the graph with the same vertices, but where there is an edge between $v_i\in V$ and $e_j\in H$ if $v_i$ does \emph{not} belong to~$E_j$. 
Finally, the \emph{(non-)incidence co-bipartite graph} of $\h$ is the (non-)incidence bipartite graph of $\h$ where $V$ and $H$ are completed into cliques.
For illustrations of these constructions, see the reductions in the next sections.

\section{Resolving sets}

In this section, we prove that \transenum{} and \minres{} are equivalent, and that our reductions preserve polynomial delay.
As for \minstres{}, we show that it is equivalent to the enumeration of maximal independent sets in graphs, and hence, that it admits a polynomial-delay algorithm.

We first deal with the reduction from \minres{}.
It is clear from the definition of distinguishing a pair of vertices that the resolving sets of a graph $G$ are exactly the transversals of the hypergraph $\h$ with the same vertex set and with an edge $E_{ab}\defeq \{v\in V(G) : \dist(a,v)\neq\dist(b,v)\}$ for every pair $a,b$ of distinct vertices in $G$.
Since $\h$ has $n$ vertices and $O(n^2)$ edges, and as it can be constructed in polynomial time in $n$, we derive the following.

\begin{theorem}\label{thm:minres-transenum}
    There is a polynomial-delay algorithm for \minres{} whenever there is one for \transenum{}.
\end{theorem}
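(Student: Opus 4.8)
The plan is to turn the observation stated just before the theorem into an algorithm. Recall that, for a graph $G$ on $n$ vertices, a set $S\subseteq V(G)$ is a resolving set of $G$ if and only if $S$ is a transversal of the hypergraph $\h$ on vertex set $V(G)$ whose edges are $E_{ab}=\{v\in V(G):\dist(a,v)\neq\dist(b,v)\}$, one for each unordered pair of distinct vertices $a,b$. Since $G$ and $\h$ share the same vertex set and the resolving sets of $G$ are \emph{exactly} the transversals of $\h$, the inclusion-wise minimal resolving sets of $G$ are exactly the minimal transversals of $\h$, i.e.\ the elements of $Tr(\h)$. Hence it suffices to build $\h$ and feed it to a \transenum{} algorithm.

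First I would check that $\h$ is computable from $G$ in time polynomial in $n$: running BFS from every vertex yields all pairwise distances in $O(n(n+|E(G)|))$ time, and then each of the $\binom{n}{2}$ edges $E_{ab}$ is produced by one scan over $V(G)$. The resulting hypergraph has $n$ vertices and at most $\binom{n}{2}=O(n^2)$ edges, each of size at most $n$, so its total encoding size $|\h|$ is $O(n^3)$. I would also record that every edge is nonempty---indeed $a\in E_{ab}$ because $\dist(a,a)=0\neq\dist(b,a)$ when $a\neq b$---so that $V(G)$ is always a resolving set and $Tr(\h)\neq\emptyset$; this merely confirms the output is well-defined.

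Next, given a polynomial-delay algorithm for \transenum{}, I would run it on input $\h$. Because $Tr(\h)$ coincides, as a family of subsets of the common ground set $V(\h)=V(G)$, with the set of minimal resolving sets of $G$, no translation of the output is needed: the algorithm emits exactly the minimal resolving sets of $G$, each exactly once. Its delay is polynomial in $|\h|=O(n^3)$, hence polynomial in the size of $G$, and the one-off construction of $\h$ is polynomial as well, so \minres{} is solved with polynomial delay. One small point to note is that $\h$ need not be Sperner---some $E_{ab}$ may strictly contain another $E_{cd}$ (e.g.\ in a star)---but this is harmless, since deleting non-minimal edges leaves $Tr(\h)$ unchanged and \transenum{} algorithms accept arbitrary hypergraphs.

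I expect essentially no obstacle in this direction: once the hypergraph reformulation is in hand, the reduction is a polynomial-time map that is the identity on the ground set and under which the families of minimal solutions literally coincide, which is precisely the situation in which polynomial delay transfers for free (cf.\ the discussion of delay-preserving reductions in Section~\ref{sec:prelims}). The genuine difficulty lies in the converse---showing that \transenum{} reduces to \minres{}, so that the two problems are equivalent---which is the content of the subsequent reductions.
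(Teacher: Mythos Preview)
Your proposal is correct and follows exactly the approach of the paper: you build the hypergraph $\h$ with edges $E_{ab}$ in polynomial time, observe that its minimal transversals coincide with the minimal resolving sets of $G$, and run the assumed \transenum{} algorithm, with delay polynomial in $|\h|$ and hence in $n$. Your additional remarks (nonemptiness of the edges, $\h$ not necessarily Sperner) are sound but inessential.
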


Let us now deal with the reduction from \transenum{}.
Let $\h$ be a hypergraph on vertex set $\{v_1, \dots, v_n\}$ and edge set $\{E_1, \dots, E_m\}$.
For convenience, in our proof, we will furthermore assume that $n$ and $m$ are powers of $2$ greater than $2$, and that no edge of $\h$ contains the full set of vertices.
Note that these assumptions can be conducted without loss of generality, in particular since it can be assumed that $\h$ is Sperner and an edge containing the full set of vertices would imply it is the only edge of $\h$.
We describe the construction of a graph on $O(n+m)$ vertices and $O(n^2+m^2)$ edges whose set of minimal resolving sets can be partitioned into two families where the first one has size $O(nm^2)$, and where the second roughly consists of $O(nm)$ copies of the minimal transversals of $\h$.
See Figure~\ref{fig:MD} for an illustration of the construction.

We start from the non-incidence co-bipartite graph of $\h$ with bipartition $V\defeq \{v_1,\dots,v_n\}$ and $H\defeq \{e_1,\dots,e_m\}$, to which we add a clique $H'\defeq \{e'_1,\dots,e'_m\}$ that we make complete to $V$.
Then $V$, $H$, and $H'$ are cliques, $v_i$ is adjacent to every $e'\in H'$, and it is adjacent to $e_j$ if and only if $v_i\not\in E_j$.
We construct two additional sets $U\defeq \{u_1,u'_1,\dots,u_{(\log n) + 1}, u'_{(\log n) + 1}\}$ and $W\defeq \{w_1,w'_1,\dots,w_{(\log m) + 1},w'_{(\log m) + 1}\}$ on $2(\log n) + 2$ and $2(\log m) + 2$ vertices, respectively.
We complete $U$ into a clique minus each of the edges $u_iu'_i$, $i\in[(\log n) + 1]$, and add to $W$ each of the edges $w_jw'_j$, $j\in[(\log m) + 1]$.
For an integer $j\in\mathbb{N}$, we shall note $I(j)$ the set of indices (starting from $1$) of bits of value $1$ in the binary representation of $j$.
Then, we connect each $v_i$, $i\in[n]$, to the vertices $u_k$ and $u'_k$ for every $k\in I(i)$, and each of $e_j$ and $e'_j$, $j\in[m]$, to the vertices $w_k$ and $w'_k$ for every $k\in I(j)$.
Observe that, by the nature of the binary coding, no element of $V$ is complete or anti-complete to~$U$, and the same can be said for $H\cup H'$ and $W$.
Note that this binary representation gadget is derived from ideas used in, e.g.,~\cite{GKIST22,FGKLMST23}.
Finally, we connect every vertex of $U$ to every vertex of $H\cup H'$, and connect every vertex of $W$ to every vertex of $V$.
This concludes the construction of our graph $G$. We start with easy observations.
The following exploits that the sets $U$ and $W$ are constituted of pairs of twins.

\begin{figure}[bt]
    \centering
    \includegraphics[scale=0.8]{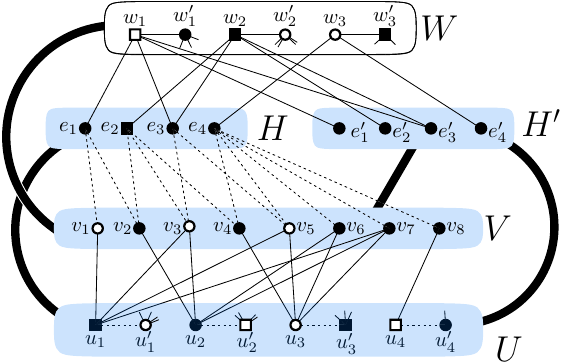}
    \caption{Illustration of the reduction from \transenum{} to \minres{} with $\h$ consisting of $E_1=\{v_1,v_2\}$, $E_2=\{v_2,v_3,v_4\}$, $E_3=\{v_3,v_5\}$, and $E_4=\{v_4,v_5,v_6,v_7,v_8\}$.
    Dashed lines represent non-edges, and a bold line between two sets of vertices $A,B$ means that $A$ is complete to $B$.
    For legibility, we do not show the edges of $G[U]$, which is almost a clique, nor the edges of the cliques $H$, $H'$, and $V$.
    We only show the non-edges between $V$ and $H$.
    We also do not fully represent some of the edges incident to the vertices $u'_i$ and $w_j'$.
    The set of white vertices is one of the $O(nm)$ minimal resolving sets associated to the minimal transversal $\{v_1,v_3,v_5\}$ of $\h$.
    The set of square vertices is one of the $O(nm^2)$ minimal resolving sets not associated with a minimal transversal.}
    \label{fig:MD}
\end{figure}

\begin{lemma}\label{lem:res-mandatory-twins}
    Let $S$ be a minimal resolving set of $G$.
    Then, $S$ intersects each of $\{u_i,u'_i\}\ (i\in[(\log n) + 1])$ and $\{w_j,w'_j\}\ (j\in[(\log m) + 1])$ on one vertex.
\end{lemma}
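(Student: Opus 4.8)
The plan is to exploit that the construction makes $\{u_i,u_i'\}$ a pair of \emph{false twins} (they are non-adjacent and $N(u_i)=N(u_i')$) and $\{w_j,w_j'\}$ a pair of \emph{true twins} (they are adjacent and $N[w_j]=N[w_j']$); both facts follow immediately from the binary-labelling gadget together with $U$ being complete to $H\cup H'$ and $W$ complete to $V$. I also note at the outset that $G$ is connected — this uses the standing assumption that no edge of $\h$ is all of $V(\h)$, so each $e_j$ has a neighbour in $V$, and that $U$ is complete to $H\cup H'$ and $W$ to $V$ — so that all distances are finite.

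First I would prove the standard metric lemma on twins: if $\{a,b\}$ are twins (false or true) in a connected graph, then $\dist(z,a)=\dist(z,b)$ for every vertex $z\notin\{a,b\}$. This is obtained by rerouting the last edge of a shortest $z$–$a$ path through the common (closed) neighbourhood of $a$ and $b$; in the true-twin case one first assumes $\dist(z,a)\le\dist(z,b)$ so that the penultimate vertex of that path is not $b$ itself. It follows that the only vertices distinguishing the pair $\{u_i,u_i'\}$ (respectively $\{w_j,w_j'\}$) are $u_i,u_i'$ (respectively $w_j,w_j'$), since e.g. $\dist(u_i,u_i)=0\ne\dist(u_i',u_i)$. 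As $S$ is resolving, it must therefore contain at least one vertex of each such pair, which is one half of the statement; I record for later use that this forces $S$ to contain at least one vertex of $W$, and at least one vertex of $\{u_k,u_k'\}$ for every $k$ — in particular for each $k\in I(i)$, which is non-empty as $i\ge1$.

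For the other half I argue that $S$ cannot contain both $u_i$ and $u_i'$: assuming it does, I show $S\setminus\{u_i'\}$ is still resolving, contradicting minimality. Let $P$ be a pair of distinct vertices; it is distinguished by some $z\in S$, and if $z\ne u_i'$ we are done, so suppose $z=u_i'$. If $u_i'\notin P$ then either $u_i\in P$ (and $u_i\in S\setminus\{u_i'\}$ distinguishes $P$ trivially) or $P\cap\{u_i,u_i'\}=\emptyset$ (and the twin lemma gives that $u_i$ distinguishes $P$ since $u_i'$ does). Otherwise $P=\{u_i',c\}$; if $c=u_i$ then $u_i\in S\setminus\{u_i'\}$ distinguishes $P$, and if $\dist(u_i,c)\le1$ then $u_i$ distinguishes $\{u_i',c\}$ because $\dist(u_i,u_i')=2$. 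In the remaining case $c\notin N[u_i]$ and $c\notin\{u_i,u_i'\}$, which by inspection of $N[u_i]$ forces $c$ to be some $v_j$ with $i\notin I(j)$, or a vertex of $W$. If $c=v_j$, any vertex of $W$ lying in $S$ is at distance $1$ from $v_j$ (as $W$ is complete to $V$) and $2$ from $u_i'$ (no $U$–$W$ edges, but a common neighbour in $V$), hence distinguishes $P$; if $c\in W$, then either $c\in S$ (and $c$ distinguishes $P$ trivially) or, by the first half, the twin of $c$ is in $S$, at distance $1$ from $c$ and $2$ from $u_i'$, hence distinguishing $P$. The case of $\{w_j,w_j'\}$ is symmetric with the roles of $U$ and $W$ swapped: assuming $\{w_j,w_j'\}\subseteq S$, the only nontrivial pair to handle is $P=\{w_j',c\}$ with $c\in N(w_j)\setminus\{w_j,w_j'\}$, i.e. $c\in V$ or $c\in\{e_\ell,e'_\ell:j\in I(\ell)\}$; in the latter case any vertex of $U$ in $S$ is at distance $1$ from $c$ (as $U$ is complete to $H\cup H'$) and $2$ from $w_j'$, and in the former case, with $c=v_i$, whichever of $u_k,u_k'$ belongs to $S$ for a chosen $k\in I(i)$ is at distance $1$ from $v_i$ and $2$ from $w_j'$.

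I expect the only real difficulty to be this last case analysis in the "at most one" direction, where one genuinely uses the global layout of $G$: that the non-neighbours of $u_i$ (other than $u_i'$) are exactly the $v$-vertices with bit $i$ off together with all of $W$, and dually for $w_j$, and then that the first half of the lemma already forces $S$ to contain enough of $W$ (respectively $U$) to distinguish those pairs without $u_i'$ (respectively $w_j'$). The twin lemma and the "at least one" direction are routine.
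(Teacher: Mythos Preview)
Your proof is correct, and in fact more careful than the paper's. The paper's argument is three sentences: it observes that each $\{u_i,u_i'\}$ and $\{w_j,w_j'\}$ is a pair of (false or true) twins, so any resolving set must contain at least one vertex from each such pair, and then simply asserts that minimality forces exactly one.

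You take the same starting point but, unlike the paper, you actually justify the ``at most one'' direction by a graph-specific case analysis. This extra work is warranted: the general statement ``a minimal resolving set contains at most one vertex of any twin pair'' is false (in $K_3$ every minimal resolving set consists of two vertices, and these form a twin pair), so removing one of two twins from a resolving set need not preserve the property for pairs $\{b,y\}$ involving the removed twin $b$. You correctly isolate exactly those pairs as the delicate ones and dispatch them using the global structure of $G$---specifically, that $W$ is complete to $V$ and $U$ is complete to $H\cup H'$, together with the already-established ``at least one'' direction applied to the \emph{other} gadget to guarantee a suitable distinguishing vertex in $S$. The paper's one-line appeal to minimality should be read as implicitly relying on this same structure; your version makes that reliance explicit.
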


\begin{proof}
    First, note that each of the sets $\{u_i,u'_i\}$ $(i\in[(\log n) + 1])$ and $\{w_j,w'_j\}$ $(j\in[(\log m) + 1])$ defines a distinct pair of (false or not) twin vertices in $G$.
    As two (false or not) twins share the same distances to every other vertex in the graph, any resolving set must intersect them in order to distinguish them, and it intersects them on precisely one element whenever it is minimal.
\end{proof}

In the following, let us consider arbitrary
\begin{align*}
    X&\in \left\{\{x_1, \dots, x_{(\log n) + 1}\} : (x_1, \dots, x_{(\log n) + 1}) \in\,\,\!\prod_{i=1}^{(\log n) + 1} \hspace{.1cm}\{u_i, u'_i\}\hspace{.14cm}\right\},\\
    Y&\in \left\{\{y_1, \dots, y_{(\log m) + 1}\} : (y_1, \dots, y_{(\log m) + 1}) \in \prod_{j=1}^{(\log m) + 1} \{w_j, w'_j\}\right\},
\end{align*}
and denote by $\Z$ the set of all possible unions $Z\defeq X\cup Y$.
Note that there are $4nm$ possible choices for $Z$, and that by Lemma~\ref{lem:res-mandatory-twins}, any minimal resolving set contains one such $Z$ as a subset.
We characterize the pairs of vertices of $G$ that are distinguished by these sets.

\begin{lemma}\label{lem:Z-distinguishes}
    Let $Z\in \Z$ and let $P$ be the set of all pairs $\{e_j,e'_j\}$ with $j\in[m]$. 
    Then, $Z$ distinguishes a pair $a,b$ of distinct vertices if and only if $(a,b)\notin P$.
\end{lemma}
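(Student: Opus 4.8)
The plan is to first compute all distances from the vertices of $U\cup W$ to the rest of $G$ (they turn out to always be $0$, $1$, or $2$), and then, from the profile $(\dist(z,a))_{z\in Z}$ of a vertex $a$, to recover both the ``part'' of $a$ among $V,H,H',U,W$ and, within that part, its binary address. Using the completeness relations of the construction, I would record the following. Every $x\in X\subseteq U$ is at distance $1$ from all of $H\cup H'$ (as $U$ is complete to $H\cup H'$), at distance $2$ from every vertex of $W$, at distance $1$ from $v_\ell$ exactly when the index of $x$ lies in $I(\ell)$ and at distance $2$ otherwise, and at distance $0$, $2$, or $1$ from a vertex of $U$ according to whether it equals $x$, is the non-chosen twin in $x$'s pair, or lies in another pair. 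Symmetrically, every $y\in Y\subseteq W$ is at distance $1$ from all of $V$, at distance $2$ from every vertex of $U$, at distance $1$ from both $e_\ell$ and $e'_\ell$ exactly when the index of $y$ lies in $I(\ell)$ and at distance $2$ otherwise, and at distance $0$, $1$, or $2$ from a vertex of $W$ according to whether it equals $y$, is its matched twin, or lies in another pair. All the stated upper bounds of $2$ are attained, via length-$2$ paths that the completeness relations of the construction readily provide.

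For the ``only if'' direction, I would observe that if $\{a,b\}=\{e_j,e'_j\}\in P$, then by the table $e_j$ and $e'_j$ are at the same distance from every vertex of $U\cup W$: distance $1$ from each vertex of $U$, and, from any $w_k$ or $w'_k$, distance $1$ when $k\in I(j)$ and distance $2$ otherwise—because $e_j$ and $e'_j$ are joined to exactly the same vertices of $W$ (those $w_k,w'_k$ with $k\in I(j)$) and are otherwise symmetric with respect to $W$. Hence no $z\in Z$ distinguishes $a$ and $b$.

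For the ``if'' direction, let $a\neq b$ with $\{a,b\}\notin P$. If $a\in Z$ or $b\in Z$, that vertex of $Z$ is at distance $0$ from itself and at distance at least $1$ from the other endpoint, so we are done; thus assume $a,b\notin Z$. From the table, such a vertex satisfies exactly one of the following, which therefore pins down its part: $\dist(y,a)=1$ for all $y\in Y$ iff $a\in V$; $\dist(x,a)=1$ for all $x\in X$ iff $a\in H\cup H'$; $\dist(y,a)=2$ for all $y\in Y$ iff $a\in U$; $\dist(x,a)=2$ for all $x\in X$ iff $a\in W$. Consequently, if $a$ and $b$ lie in different parts and $\{a,b\}\not\subseteq H\cup H'$, some $z\in Z$ distinguishes them. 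For $a,b\in V$, say $a=v_\ell$ and $b=v_{\ell'}$, I would take $i$ in the symmetric difference of $I(\ell)$ and $I(\ell')$: then $x_i\in X$ is at distance $1$ from one of $a,b$ and $2$ from the other. The same argument with $Y$ in place of $X$ handles $a,b\in H$ and $a,b\in H'$ separately. For $a,b\in U$ (necessarily non-chosen twins of two distinct pairs), the chosen representative $x_i\in X$ of one of these pairs is at distance $2$ from its own twin and $1$ from the other; for $a,b\in W$, the analogous $y_j\in Y$ is at distance $1$ (an edge) from its twin and $2$ from the other. Finally, for $a=e_j\in H$ and $b=e'_k\in H'$, every $x\in X$ is at distance $1$ from both, but $\{a,b\}\notin P$ forces $j\neq k$, hence $I(j)\neq I(k)$, and some $y\in Y$ is at distance $1$ from one of $a,b$ and $2$ from the other. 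This exhausts all cases.

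I expect the main obstacle to be the bookkeeping in the last direction—checking that no two vertices in distinct parts accidentally receive the same profile—for which the four ``exactly one of'' characterizations are the key tool. The subtlest point they hide is that a vertex $e_\ell\in H$ whose index $\ell$ is a power of $2$ has the same $Y$-profile as a non-chosen twin $w_k\in W$ (both carry a single $1$ among their $Y$-coordinates), so these can be separated only through the $X$-profile, which is all $1$'s for $e_\ell$ and all $2$'s for $w_k$. Getting the distance table right at the outset is what makes every one of these comparisons routine.
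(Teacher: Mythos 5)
Your proof is correct and follows essentially the same strategy as the paper's: a case analysis on which of $V$, $H\cup H'$, $U$, $W$ the vertices $a,b$ fall into, using the binary coding to separate vertices within the same part and using distances to $X$ or $Y$ to separate vertices across parts. Your version is a bit more systematic in that you first set up the full distance table and then read off the four "profile pins down the part" characterizations, whereas the paper interleaves these observations into the cases; both verifications are complete and the small subtlety you flag (an $e_\ell$ whose index is a power of two having the same $Y$-profile as a non-chosen $W$-twin, resolved by the $X$-profile) is handled correctly in both.
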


\begin{proof}
    We consider cases depending on the nature of each pair $a,b$ of distinct vertices in $G$ to show that they are distinguished by $Z$.
    Clearly if one of $a$ or $b$ belongs to $Z$, then the pair is distinguished.
    We thus assume $a$ and $b$ to be disjoint from $Z$ in the rest of the case analysis.
    
    We first consider $a\in U$.
    Then, $a\in \{u_i, u'_i\}$ for some $i\in[\log n + 1]$ and, 
    by assumption, $a\not\in Z$ and $a\neq x_i$.
    Then, $\dist(a, x_i)=2$, while $\dist(b, x_i)=1$ for any $b$ in $U$ or $H\cup H'$.
    If $b$ belongs to $V$ or $W$, then there is some $y_j\in Y$ such that $\dist(b, y_j)\leq 1$ and $\dist(a, y_j)\geq 2$.
    Hence, the pair $a,b$ is distinguished in this case.
    
    We now consider $a\in W$.
    Then, $a\in \{w_j, w'_j\}$ for some $j\in[\log m + 1]$ and, by assumption, $a\not\in Z$ and $a\neq y_j$. 
    If $b$ belongs to $W$, then $\dist(a,y_j)=1$ and $\dist(b,y_j)\geq 2$.
    If $b$ belongs to $H\cup H'$, then $\dist(a,x_i)\geq 2$ and $\dist(b,x_i)= 1$ for some $x_i\in X$.
    The same holds when $b$ belongs to $V$ as every element $v\in V$ is adjacent to some $x_i$ by the nature of the binary coding between $U$ and $V$.
    The case $b\in U$ was handled above.
    We conclude that the pair $a,b$ is distinguished in this case.
    
    Now assume that $a\in V$ and $b\in H\cup H'$. 
    Recall that, by the nature of the binary coding between $U$ and $V$, for each such $a$, there exists $x_i\in X$ such that $\dist(a,x_i)\geq 2$. 
    As $\dist(b,x_i)= 1$, the pair $a,b$ is distinguished by $x_i$ in this case.
    
    We are left with $\{a,b\}$ being a subset of $V$ or $H\cup H'$.
    In each of these cases, $a$ and $b$ have distinct adjacencies with respect to $U$ or $W$ as their indices within $V$ or $H\cup H'$ are distinct.
    In particular, this is true when $a\in H$ and $b\in H'$ or vice versa since $\{a,b\}\notin P$.
    Then, there exists $z\in Z$ such that $\dist(a,z)\neq \dist(b,z)$, and hence, the pair $a,b$ is distinguished, concluding the case.
    
    If $\{a,b\}\in P$, then no vertex in $Z$ distinguishes the pair $a,b$ as $U$ is complete to $H\cup H'$, each vertex in $W$ adjacent to $e_j$ is also adjacent to $e'_j$ for all $j\in[m]$, and each vertex in $W$ is at distance at most $2$ from each vertex in $H\cup H'$.
\end{proof}

Since by Lemma~\ref{lem:res-mandatory-twins} every minimal resolving set contains a choice of $Z$ as above, we get that the non-trivial part of minimal resolving sets in $G$ is dedicated to distinguishing pairs in $P$.
We characterize these non-trivial parts in the following.

\begin{lemma}\label{lem:res-garbage}
    If $S$ is a minimal resolving set of $G$ such that $S\cap (H\cup H')\neq\emptyset$, then $S=Z\cup \{e\}$ for some $Z\in \Z$ and $e\in H\cup H'$.
\end{lemma}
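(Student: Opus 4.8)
The plan is to combine Lemmas~\ref{lem:res-mandatory-twins} and~\ref{lem:Z-distinguishes} with one additional observation: \emph{every} single vertex of $H\cup H'$ distinguishes \emph{all} of the pairs in $P=\{\{e_j,e'_j\}:j\in[m]\}$, i.e., precisely the pairs that no $Z\in\Z$ can handle. Granting this, the lemma is immediate: fix a vertex $e\in S\cap(H\cup H')$; by Lemma~\ref{lem:res-mandatory-twins}, selecting the unique element of $S$ in each pair $\{u_i,u'_i\}$ and $\{w_j,w'_j\}$ produces a set $Z\in\Z$ with $Z\subseteq S$, and since $U\cup W$ is disjoint from $H\cup H'$ we have $e\notin Z$, hence $Z\cup\{e\}\subseteq S$. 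By Lemma~\ref{lem:Z-distinguishes}, $Z$ distinguishes every pair of distinct vertices except those in $P$, and by the observation $e$ distinguishes every pair in $P$; therefore $Z\cup\{e\}$ is a resolving set of $G$ contained in the minimal resolving set $S$, forcing $S=Z\cup\{e\}$.

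The key step is thus to verify the observation. I would write $e=e_\ell$ or $e=e'_\ell$ for some $\ell\in[m]$ and argue that $e$ distinguishes a given pair $\{e_j,e'_j\}$. If $\ell=j$ this is trivial, since $e$ is then one of $e_j,e'_j$ and $\dist(e,e)=0<\dist(e_j,e'_j)$ as $e_j\neq e'_j$. If $\ell\neq j$ and $e=e_\ell\in H$, then $\dist(e_\ell,e_j)=1$ because $H$ is a clique, while $\dist(e_\ell,e'_j)=2$: these vertices are non-adjacent (no edges are added between $H$ and $H'$ in the construction), but both are complete to $U$, so there is a path of length $2$ through, say, $u_1$. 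The case $e=e'_\ell\in H'$ is symmetric, using that $H'$ is a clique and that $e'_\ell$ and $e_j$ are non-adjacent but joined via $u_1$. In all cases $\dist(\cdot,e)$ distinguishes the pair.

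I expect the only delicate point to be this distance analysis, and even it is routine: one simply has to pick the right intermediate vertices (elements of $U$ for the $H$-to-$H'$ distances) and read off from the construction that the relevant pairs are genuinely non-adjacent, so that the distances involved are exactly $1$ versus $2$. Everything else is bookkeeping: the reduction to "$Z\cup\{e\}$ already resolves $G$" and then an appeal to minimality of $S$ to conclude equality.
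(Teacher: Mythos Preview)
Your proposal is correct and follows essentially the same approach as the paper's proof: both combine Lemmas~\ref{lem:res-mandatory-twins} and~\ref{lem:Z-distinguishes} with the observation that any single vertex $e\in H\cup H'$ distinguishes every pair in $P$ (because $H$ and $H'$ are cliques with no edges between them, so $\dist(e,\cdot)$ takes values $1$ and $2$ on the two sides of each pair), and then appeal to minimality of $S$ to conclude $S=Z\cup\{e\}$. Your write-up simply spells out the distance computation in more detail than the paper does.
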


\begin{proof}
    Recall that, by Lemma~\ref{lem:res-mandatory-twins}, there exists $Z\in\Z$ such that $S\cap (U\cup W)=Z$.
    By Lemma~\ref{lem:Z-distinguishes}, only pairs $\{a,b\}\in P$  are not distinguished by $Z$.
    Since there is no edge between $H$ and $H'$, and as these sets are cliques, picking $e$ in any of these sets will satisfy $\dist(a,e)\neq\dist(b,e)$.
    Hence, $Z\cup \{e\}$ is a resolving set for every $e\in H\cup H'$, and the lemma follows by minimality.
\end{proof}

\begin{lemma}\label{lem:res-tr}
    If $S$ is a minimal resolving set of $G$ such that $S\cap (H\cup H')=\emptyset$, then $S=Z\cup T$ for some $Z\in \Z$ and some minimal transversal $T$ of $\h$.
\end{lemma}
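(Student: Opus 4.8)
The plan is to combine the two preceding lemmas with a single new distance computation. By Lemma~\ref{lem:res-mandatory-twins} there is some $Z\in\Z$ with $S\cap(U\cup W)=Z$, and since by hypothesis $S\cap(H\cup H')=\emptyset$, we may write $S=Z\cup T$ with $T\defeq S\cap V$. By Lemma~\ref{lem:Z-distinguishes}, $Z$ already distinguishes every pair of distinct vertices except the pairs $\{e_j,e'_j\}$ with $j\in[m]$, so the whole content of the lemma is to understand how the extra vertices $T\subseteq V$ are forced to distinguish exactly these pairs.

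The key claim I would establish is that, for $j\in[m]$, a vertex $v_i\in V$ distinguishes $e_j$ from $e'_j$ if and only if $v_i\in E_j$. Indeed, $v_i$ is adjacent to $e'_j$ since $H'$ is complete to $V$, whereas $v_i$ is adjacent to $e_j$ exactly when $v_i\notin E_j$; and when $v_i\in E_j$ one has $\dist(v_i,e_j)=2$, because $V$ is a clique and, as no edge of $\h$ equals the full vertex set, there is some $v_{i'}\notin E_j$, which is then adjacent to both $v_i$ and $e_j$. Combining this with the last paragraph of the proof of Lemma~\ref{lem:Z-distinguishes} (no vertex of $U\cup W$ distinguishes any pair $\{e_j,e'_j\}$) and the hypothesis that neither $e_j$ nor $e'_j$ lies in $S$, I obtain the characterization: $S$ is a resolving set of $G$ if and only if $T\cap E_j\neq\emptyset$ for every $j\in[m]$, that is, if and only if $T$ is a transversal of $\h$. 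The very same argument also shows that for any transversal $T'$ of $\h$, the set $Z\cup T'$ is a resolving set of $G$.

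It then remains to upgrade ``$T$ is a transversal'' to ``$T$ is a minimal transversal'', which follows at once from the minimality of $S$: if some $T'\subsetneq T$ were a transversal of $\h$, then $Z\cup T'$ would be a resolving set strictly contained in $S=Z\cup T$, contradicting minimality. Hence $S=Z\cup T$ with $Z\in\Z$ and $T$ a minimal transversal of $\h$, as desired. I do not expect a real obstacle here; the only point requiring care is the ``only if'' direction of the key claim, namely checking that $v_i\notin E_j$ gives $\dist(v_i,e_j)=\dist(v_i,e'_j)=1$ (so $v_i$ does not distinguish the pair) while $v_i\in E_j$ gives $\dist(v_i,e_j)=2$, where the assumption that $\h$ has no edge equal to the full vertex set is exactly what makes the latter distance computation go through.
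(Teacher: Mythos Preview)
Your proposal is correct and follows essentially the same approach as the paper: decompose $S=Z\cup T$ via Lemma~\ref{lem:res-mandatory-twins}, use Lemma~\ref{lem:Z-distinguishes} to reduce to the pairs $\{e_j,e'_j\}$, observe that $v_i\in V$ distinguishes such a pair iff $v_i\in E_j$ (since $V$ is complete to $H'$ and the non-incidence construction governs adjacency to $H$), and deduce minimality of $T$ from minimality of $S$. Your explicit verification that $\dist(v_i,e_j)=2$ when $v_i\in E_j$ is more than is needed (non-adjacency, i.e.\ $\dist\neq 1$, already suffices), but it is not wrong; incidentally that distance is also witnessed by a path through $W$, so the ``no full edge'' assumption is not essential at this particular step.
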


\begin{proof}
    Let $\{e_j,e'_j\}\in P$.
    As by assumption $S\cap (H\cup H')=\emptyset$, then, by Lemma~\ref{lem:res-mandatory-twins}, $S=Z\cup T$ for some $T\subseteq V$.
    Now, for $v\in V$ to distinguish $e_j$ from $e_j'$, it must be that $v$ is non-adjacent to $e_j$ since $v$ is complete to $H'$.
    By construction, we deduce that $v\in E_j$ in that case.
    Since by Lemma~\ref{lem:Z-distinguishes} every pair in $P$ needs to be distinguished by $T$, we derive that $T$ is a transversal of $\h$.
    The minimality of $S$ implies that, for every $v\in V$, there exists at least one pair in $P$ that is distinguished by $v$ but not by $T\setminus \{v\}$.
    Hence, $T$ is a minimal transversal of $\h$.
\end{proof}

\begin{lemma}\label{lem:tr-res}
    If $T$ is a minimal transversal of $\h$, then $Z\cup T$ is a minimal resolving set of $G$ for any $Z\in \Z$.
\end{lemma}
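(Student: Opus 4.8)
\section*{Proof proposal for Lemma~\ref{lem:tr-res}}

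The plan is to verify the two halves of the statement separately: that $Z\cup T$ is a resolving set of $G$, and that it is inclusion-wise minimal. The first half is essentially immediate from Lemma~\ref{lem:Z-distinguishes}, which says that $Z$ distinguishes every pair of distinct vertices except the pairs in $P=\{\{e_j,e_j'\}:j\in[m]\}$. So it suffices to check that each pair in $P$ is distinguished by some vertex of $T$. For a fixed $j$, I would observe the equivalence: a vertex $v\in V$ distinguishes $e_j$ from $e_j'$ if and only if $v\in E_j$. Indeed, $v$ is complete to $H'$, so $\dist(v,e_j')=1$; and $v$ is adjacent to $e_j$ if and only if $v\notin E_j$ by the non-incidence construction, so $v\in E_j$ forces $\dist(v,e_j)\geq 2$. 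Since $T$ is a transversal of $\h$, it meets every edge $E_j$, hence distinguishes every pair of $P$; together with Lemma~\ref{lem:Z-distinguishes} this shows $Z\cup T$ is a resolving set of $G$.

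For minimality, I would show that deleting any single element of $Z\cup T$ destroys the resolving property, distinguishing two cases. First, recall that $T\subseteq V$ is disjoint from $U\cup W$, while $Z=X\cup Y$ meets each pair $\{u_i,u_i'\}$ and each pair $\{w_j,w_j'\}$ in exactly one vertex; so $Z\cup T$ meets each such pair in exactly one vertex. If the deleted element is that representative, say $x_i$ of $\{u_i,u_i'\}$ (the case of a $y_j$ and a pair $\{w_j,w_j'\}$ being symmetric), then the remaining set contains neither vertex of the pair, and since these two vertices are (false or true) twins---the observation underlying Lemma~\ref{lem:res-mandatory-twins}---no set avoiding both can distinguish them; hence the remaining set is not resolving. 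If instead the deleted element is some $v\in T$, I would invoke the minimality of $T$ as a transversal to obtain an edge $E_j$ of $\h$ with $T\cap E_j=\{v\}$. Then no vertex of $Z$ distinguishes the pair $\{e_j,e_j'\}$ by Lemma~\ref{lem:Z-distinguishes}, and no vertex $v'\in T\setminus\{v\}$ distinguishes it either, since $v'\notin E_j$ implies $v'$ is adjacent to both $e_j$ and $e_j'$; hence $(Z\cup T)\setminus\{v\}$ fails to distinguish $\{e_j,e_j'\}$ and is not resolving.

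I do not expect a genuine obstacle here: every ingredient is supplied either by Lemmas~\ref{lem:Z-distinguishes} and~\ref{lem:res-mandatory-twins} or by the correspondence ``$v$ distinguishes $\{e_j,e_j'\}$ iff $v\in E_j$''. The only point requiring a little care is the distance bookkeeping $\dist(v,e_j')=1$ and $\dist(v,e_j)\geq 2$ for $v\in E_j$, but both facts are read directly off the construction (completeness of $V$ to $H'$, and non-adjacency of $v$ and $e_j$ when $v\in E_j$); note in particular that we never need the exact value of $\dist(v,e_j)$, only that it is strictly larger than $1$.
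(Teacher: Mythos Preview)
Your proposal is correct and follows essentially the same approach as the paper: use Lemma~\ref{lem:Z-distinguishes} plus the transversality of $T$ to show $Z\cup T$ is resolving, then argue minimality via the twin argument of Lemma~\ref{lem:res-mandatory-twins} for elements of $Z$ and via the minimality of $T$ together with Lemma~\ref{lem:Z-distinguishes} for elements of $T$. You are simply more explicit than the paper about the equivalence ``$v$ distinguishes $\{e_j,e_j'\}$ iff $v\in E_j$'' and about the two cases in the minimality argument, but the underlying logic is identical.
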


\begin{proof}
    Since $T$ is a transversal of $\h$, every pair of $P$ is distinguished by a vertex of~$T$.
    By Lemmas~\ref{lem:res-mandatory-twins}~and~\ref{lem:Z-distinguishes}, we conclude that $Z\cup T$ is a resolving set.
    It is minimal by Lemma~\ref{lem:res-mandatory-twins} and the fact that, for every $v_i\in T$, there is some $E_j$ in $\h$ such that $(T\setminus\{v_i\})\cap E_j =\emptyset$, and hence, the pair $\{e_j,e_j'\}$ is not distinguished by $T\setminus \{v_i\}$, and by Lemma~\ref{lem:Z-distinguishes} this pair is not  distinguished by $(Z\cup T)\setminus \{v_i\}$.
\end{proof}

Note that, to every $T\in Tr(\h)$, there are $4nm$ corresponding distinct minimal resolving sets in $G$ obtained by extending $T$ with every possible $Z\in \Z$. 
We show that our reduction still preserves polynomial delay at the cost of (potentially exponential) space using a folklore trick on regularizing the outputs.

\begin{theorem}\label{thm:tr-minres}
    There is a polynomial-delay algorithm for \transenum{} whenever there is one for \minres{}.
\end{theorem}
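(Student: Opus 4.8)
The plan is to run a polynomial-delay algorithm $\algoa$ for \minres{} on the graph $G$ constructed above---which has $O(n+m)$ vertices and $O(n^2+m^2)$ edges and is built from $\h$ in polynomial time---and to post-process its stream of outputs into the stream $Tr(\h)$. Each minimal resolving set $S$ produced by $\algoa$ falls, by Lemmas~\ref{lem:res-mandatory-twins}, \ref{lem:res-garbage}, \ref{lem:res-tr}, and~\ref{lem:tr-res}, into exactly one of two types: either $S\cap(H\cup H')\neq\emptyset$, in which case $S$ is one of the at most $8nm^2$ ``garbage'' sets of the form $Z\cup\{e\}$ with $Z\in\Z$ and $e\in H\cup H'$, and we discard it; or $S\cap(H\cup H')=\emptyset$, in which case $S=Z\cup T$ with $Z\in\Z$ and $T=S\cap V$ a minimal transversal of $\h$. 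By Lemmas~\ref{lem:res-tr}~and~\ref{lem:tr-res}, the sets of the second type are precisely the $4nm$ distinct sets $Z\cup T$ over all $Z\in\Z$ and $T\in Tr(\h)$; hence discarding the garbage and removing duplicates among the extracted candidates $S\cap V$ yields exactly $Tr(\h)$, each element once. Determining the type of $S$, extracting $S\cap V$, and testing it for membership among the transversals already found each take polynomial time.

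The only difficulty is preserving polynomial delay: $\algoa$ may output a long run of garbage sets, or of copies $Z\cup T$ of transversals $T$ already seen, between two ``new'' transversals---in particular it could defer all such repetitions to the very end of its execution---so simply printing each new transversal as it is discovered does not yield polynomial delay. We therefore regularise the output with a folklore buffering scheme, at the cost of exponential space. We maintain the set $D\subseteq Tr(\h)$ of transversals discovered so far (for deduplication), a FIFO queue $Q$ of discovered transversals not yet printed, and a counter $t$ of the number of sets $\algoa$ has produced. We interleave the execution of $\algoa$ with our own outputs: we run $\algoa$ until it produces its next set $S$ (or halts), we update $t$, and if $S$ is of the second type with $T\defeq S\cap V\notin D$ we append $T$ to $D$ and to the back of $Q$; then, as long as the number of transversals printed so far is strictly less than $\lfloor (t-8nm^2)/(4nm)\rfloor$, we pop the front of $Q$ and print it. When $\algoa$ halts, we flush $Q$, printing its remaining elements one per step.

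For correctness and the delay bound, note first that among the first $t$ outputs of $\algoa$ at most $8nm^2$ are garbage, so at least $t-8nm^2$ are copies of transversals; since each transversal of $\h$ has exactly $4nm$ copies among \emph{all} outputs of $\algoa$, at least $(t-8nm^2)/(4nm)$ distinct transversals have been discovered, i.e.\ $|D|\geq (t-8nm^2)/(4nm)$. Hence whenever the printed count is below $\lfloor (t-8nm^2)/(4nm)\rfloor$ the queue $Q$ is nonempty, so every pop succeeds, and by construction each transversal is printed at most once. The target count $\lfloor (t-8nm^2)/(4nm)\rfloor$ is nondecreasing, increases by at most $1$ per increment of $t$, and increases by exactly $1$ over any $4nm$ consecutive values of $t$; therefore at most one print occurs per output of $\algoa$ and at least one print occurs within every $4nm$ outputs of $\algoa$. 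Since $\algoa$ has polynomial delay on $G$ and our per-output bookkeeping is polynomial, the time between consecutive prints---as well as before the first print and during the final flush after $\algoa$ halts---is polynomial in $|G|$, hence in $|\h|$. Finally, since $\algoa$ enumerates all minimal resolving sets of $G$, at termination every minimal transversal has been discovered, so the flush of $Q$ guarantees that each element of $Tr(\h)$ is printed exactly once.

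The step I expect to be the crux is precisely this regularisation: the reduction is ``almost'' delay-preserving---there are only polynomially many garbage sets and each transversal has only polynomially many preimages---but converting ``bounded multiplicity with polynomial delay'' into ``polynomial delay without repetition'' forces one to buffer the already-found transversals and to throttle the output rate so that it tracks the progress of $\algoa$, which is exactly what necessitates the exponential space.
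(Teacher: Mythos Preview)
Your proposal is correct and follows essentially the same approach as the paper: build $G$, run $\algoa$, filter out the $O(nm^2)$ garbage sets and the $4nm$-fold duplicates, and regularise the resulting stream using a queue at the cost of exponential space. The only difference is presentational---the paper first argues that the filtered algorithm is incremental-polynomial with linear dependence on $i$ and then invokes a cited regularisation lemma, whereas you make the throttling explicit via the target $\lfloor (t-8nm^2)/(4nm)\rfloor$ and verify the delay bound directly; your version is more self-contained but otherwise equivalent.
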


\begin{proof}
    Let $\algoa{}$ be an algorithm for \minres{} running with polynomial delay $f(n)$ for some function $f:\mathbb{N}\to\mathbb{N}$, where $n$ is the number of vertices in $G$.
    We first describe an incremental-polynomial-time algorithm $\algob$ for \transenum{} generating the $i^\text{th}$ solution in $O(i\cdot (nm^2 \cdot f(n)))$ time.
    We start by constructing $G$ as above.
    Clearly, this can be done in polynomial time in $n+m$.
    Then, we simulate $\algoa{}$ on~$G$. 
    Each time $\algoa$ produces a set of the form $S=Z\cup T$ with $|T|\geq 2$, we check whether $T$ has already been obtained before by keeping every such $T$ in memory, and output it as a solution for \transenum{} if not.
    This concludes the description of $\algob{}$. 
    Its correctness follows from Lemmas~\ref{lem:res-garbage}, \ref{lem:res-tr}, and \ref{lem:tr-res}.

    Let us analyze the complexity of $\algob$.    
    By Lemma~\ref{lem:res-garbage}, $\algoa$ generates at most $O(nm^2)$ solutions in total time $O(nm^2\cdot f(n))$ before generating a first solution of the form $Z\cup T$ with $T\in Tr(\h)$.
    Hence, the first solution of $\algob$ is obtained in $O(nm^2\cdot f(n))$ time, as required.
    Suppose now that $\algob{}$ has produced $i$ solutions $T_1,\dots,T_i\in Tr(\h)$ in $O(i\cdot (nm^2 \cdot f(n)))$ time.
    By Lemmas~\ref{lem:res-garbage} and \ref{lem:res-tr}, when $\algob{}$ produces the $(i+1)^\text{th}$ solution for \transenum{}, the simulation of $\algoa$ has generated at most $i\cdot 4nm + O(nm^2)$ solutions of the form $Z\cup \{e\}$ with $e\in H\cup H'$ or $Z\cup T$ for \mbox{$T\in \{T_1,\dots, T_{i}\}$}.
    This takes $O\big((i\cdot 4nm + nm^2)\cdot f(n)\big)$ time by assumption, after which $\algob$ produces the next solution.
    Thus, in total, $\algob$ has spent $O(i\cdot 4nm \cdot f(n) + nm^2\cdot f(n))$ time outputting the $(i+1)^\text{th}$ solution of \transenum{} as desired.
    
    In the incremental time of $\algob$, the dependence on $i$ is linear.
    Using a folklore trick on regularizing the delay of such algorithms (see, e.g.,~\cite[Proposition~3]{capelli2023geometric}), we can regularize algorithm $\algob$ to polynomial-delay by keeping each new set $T$ in a queue, and pulling a new set from the queue every $O(nm^2\cdot f(n))$ steps.
\end{proof}

The space needed for the reduction of Theorem~\ref{thm:tr-minres} to hold is potentially exponential, as every obtained minimal transversal is stored in a queue.
However, using another folklore trick (see, e.g.,~\mbox{\cite[Section 3.3]{bonamy2020comp}}) on checking whether solutions have already been obtained by running the same algorithm on the same number of steps minus one, the reduction could preserve incremental-polynomial time and polynomial space at the cost of a worse dependence on the number of solutions. 
We end this section by dealing with \minstres{}.

\begin{theorem}
    \minstres{} can be solved with polynomial delay.
\end{theorem}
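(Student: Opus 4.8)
The plan is to reduce \minstres{} to the enumeration of maximal independent sets in graphs, which is solvable with polynomial delay~\cite{TsukiyamaIAS77,johnson1988generating}. The bridge is the classical characterization of strong resolving sets as vertex covers of an auxiliary graph, due to Oellermann and Peters-Fransen~\cite{OellermannP07}, already alluded to in the introduction.

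First I would recall the relevant notions. Given $G$, two vertices $u,v$ are \emph{mutually maximally distant} if $\dist(u,w)\le \dist(u,v)$ for every $w\in N(v)$ and $\dist(v,w)\le \dist(u,v)$ for every $w\in N(u)$. Let $G_{SR}$ be the graph on vertex set $V(G)$ with an edge between every pair of mutually maximally distant vertices of $G$. The key fact from~\cite{OellermannP07} is that $S\subseteq V(G)$ is a strong resolving set of $G$ if and only if $S$ is a vertex cover of $G_{SR}$. Since these two set families are literally equal, their inclusion-minimal members coincide as well: the minimal strong resolving sets of $G$ are exactly the minimal vertex covers of $G_{SR}$.

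Next I would record two easy facts. (1) $G_{SR}$ can be computed from $G$ in polynomial time: compute all pairwise distances in $G$ (a BFS from each vertex suffices), then, for each of the $O(n^2)$ pairs $u,v$, test the two defining inequalities over the at most $n$ neighbours of $u$ and of $v$. (2) The complementation map $C\mapsto V(G_{SR})\setminus C$ is a bijection between the minimal vertex covers of $G_{SR}$ and the maximal independent sets of $G_{SR}$; note in particular that an isolated vertex of $G_{SR}$ (a vertex not mutually maximally distant with any other) lies in every maximal independent set, hence in no minimal vertex cover, which is consistent with such a vertex never belonging to a minimal strong resolving set.

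Putting these together yields the algorithm: build $G_{SR}$; run a polynomial-delay algorithm for maximal independent sets enumeration on $G_{SR}$~\cite{TsukiyamaIAS77}; and, for each maximal independent set $I$ it produces, output $V(G_{SR})\setminus I$. By the bijection of fact (2) composed with the characterization of~\cite{OellermannP07}, every minimal strong resolving set of $G$ is produced exactly once; since the preprocessing is polynomial and only polynomial extra work per solution is incurred, the overall algorithm runs with polynomial delay. I do not expect a genuine obstacle here: the only points requiring care are verifying that the two set families literally coincide so that minimality transfers without gaining, losing, or duplicating solutions, and handling the isolated vertices of $G_{SR}$ correctly in the complementation step, both of which are routine. (Running this correspondence in the reverse direction, via a realization of an arbitrary graph as a strong resolving graph, is what underlies the claimed equivalence with maximal independent sets enumeration.)
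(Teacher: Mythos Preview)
Your proposal is correct and follows essentially the same approach as the paper: reduce via the Oellermann--Peters-Fransen characterization of strong resolving sets as vertex covers of the auxiliary graph $G_{SR}$, then enumerate maximal independent sets of $G_{SR}$ with polynomial delay using~\cite{TsukiyamaIAS77} and complement. You merely spell out more details (the construction of $G_{SR}$, the handling of isolated vertices) than the paper does, but the argument is the same.
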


\begin{proof}
    It is known that, given a graph $G$, another graph $G'$ can be constructed in polynomial time such that the vertex covers of $G'$ are exactly the strong resolving sets of $G$~\cite[Theorem 2.1]{OellermannP07}.
    Moreover, the size of $G'$ is polynomial in the size of $G$, namely it satisfies $V(G')=V(G)$.
    As the vertex covers of a graph are the complements of its independent sets, we deduce a polynomial-delay algorithm for \minstres{} by the algorithm of Tsukiyama et al.~\cite{TsukiyamaIAS77}.
\end{proof}

\section{Geodetic sets}\label{sec:geo}

In this section, we prove that \transenum{} and \mingeo{} on split graphs are equivalent, and that our reductions preserve polynomial delay.
As for the general case, we show \mingeo{} to be a particular case of enumerating all the minimal flats of the graphic matroid associated to $K_n$ that are transversals of a given hypergraph, whose complexity status is unsettled to date.

We first deal with the reduction from \transenum{}.
Let $\h$ be a hypergraph on vertex set $\{v_1,\dots, v_n\}$ and edge set $\{E_1,\dots, E_m\}$.
We furthermore assume that $n,m\geq 1$ and that no vertex of $\h$ appears in every edge.
Note that these assumptions can be conducted without loss of generality.
In particular, if a vertex $v$ appears in every edge, then $Tr(\h)$ consists of $\{v\}$ and the minimal transversals of $\h'\defeq \{E\setminus \{v\}: E \in \h\}$, and so, solving \transenum{} on $\h$ is essentially equivalent to solving it on $\h'$, and thus, we can recursively remove such vertices.

We describe the construction of a split graph $G$ on $O(n+m)$ and $O(n^2m^2)$ edges whose set of minimal geodetic sets is partitioned into two families where the first has size $O(m)$ and the second is in bijection with the set of minimal transversals of $\h$. 
See Figure~\ref{fig:geodetic} for an illustration of the construction. 
We start from the non-incidence bipartite graph of $\h$ with bipartition $V\defeq \{v_1,\dots, v_n\}$ and $H\defeq \{e_1, \dots, e_m\}$, to which we add a set of vertices $U\defeq \{u_1, \dots, u_m\}$ with $u_j$ only adjacent to $e_j$ for each $j\in [m]$.
We then complete $U\cup V$ into a clique and add a vertex $e^*$ adjacent to every vertex in $V$. Finally, we add a vertex $u^*$ adjacent to every vertex in $G$.
This completes the construction.
We note that $G$ is a split graph with clique $K\defeq U\cup V\cup \{u^*\}$ and independent set $I\defeq H\cup \{e^*\}$.

Since $u^*$ is a universal vertex of $G$, the diameter of $G$ is at most 2, and we may reformulate $x$ being on a shortest $a$--$b$ path with $a\neq x\neq b$ as $x$ being the middle vertex of a $P_3$ in $G$ (as an induced subgraph).
We derive easy observations.

\begin{figure}[ht]
	\centering
	\includegraphics[scale=0.8]{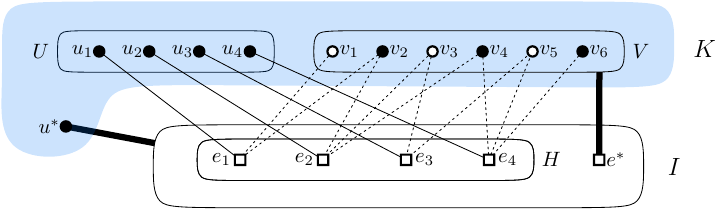}
	\caption{Illustration of the reduction from \transenum{} to \mingeo{} with $\h$ consisting of $E_1=\{v_1,v_2\}$, $E_2=\{v_2,v_3,v_4\}$, $E_3=\{v_3,v_5\}$, and $E_4=\{v_4,v_5,v_6\}$. 
    Dashed lines represent non-edges and the bold lines incident to $u^*$ and $e^*$ mean these two vertices are complete to $I$ and $V$, respectively.
    For legibility, we do not represent the edges of the clique $K$. The square vertices belong to any geodetic set. The set of white vertices is a minimal geodetic set obtained from the minimal transversal $\{v_1,v_3,v_5\}$ of $\h$.}
	\label{fig:geodetic}
\end{figure}

\begin{lemma}\label{lem:geo-H-included}
    The set $I$ is contained in every geodetic set of $G$.
\end{lemma}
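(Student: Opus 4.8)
The plan is to use the reformulation noted just above the lemma: since $G$ has diameter $2$, a vertex $x$ lies in the interior of a shortest path between two other vertices if and only if $x$ is the middle vertex of an induced $P_3$, i.e. $x$ has two pairwise non-adjacent neighbors. Consequently, a vertex whose open neighborhood induces a clique (a \emph{simplicial} vertex) can never be covered by a pair of vertices both distinct from it, and hence must belong to every geodetic set of $G$. So it suffices to prove that every vertex of $I$ is simplicial in $G$.

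To that end, I would first read off the neighborhoods of the vertices of $I$ directly from the construction. The vertex $e^*$ is adjacent precisely to the vertices of $V$ together with the universal vertex $u^*$, so $N(e^*)\subseteq V\cup\{u^*\}\subseteq K$. For $j\in[m]$, the vertex $e_j$ is adjacent to $u^*$, to $u_j$, and to those $v_i$ with $v_i\notin E_j$ (coming from the non-incidence bipartite graph); as $H$ is independent and $u_{j'}$ is non-adjacent to $e_j$ for $j'\neq j$, this is all, so $N(e_j)\subseteq U\cup V\cup\{u^*\}=K$. Since $K$ is a clique, any subset of $K$ induces a clique; in particular $N(e^*)$ and each $N(e_j)$ induce cliques, so every vertex of $I$ is simplicial.

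Finally I would conclude as follows: let $S$ be a geodetic set of $G$ and let $w\in I$; if $w\notin S$, then, as $|V(G)|\geq 5$ forces $|S|\geq 2$, there are two distinct vertices $a,b\in S\setminus\{w\}$ with $w$ in the interior of a shortest $a$--$b$ path, hence $w$ has two non-adjacent neighbors, contradicting that $N(w)$ is a clique. Therefore $w\in S$, and since $w\in I$ was arbitrary, $I\subseteq S$. I do not expect any genuine obstacle here; the only point to keep an eye on is the degenerate case where some $N(e_j)$ is empty or a singleton (for instance if $E_j=V$), but such a neighborhood still trivially induces a clique, so the argument goes through unchanged.
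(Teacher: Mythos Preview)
Your proof is correct and follows essentially the same approach as the paper: the paper's one-line proof simply states that no vertex of $I$ is the middle vertex of a $P_3$ in $G$, which is exactly the simpliciality observation you spell out in detail (since $G$ is split with independent side $I$ and clique side $K$, every $w\in I$ has $N(w)\subseteq K$). Your version merely makes explicit what the paper leaves to the reader; the minor remark about $|S|\geq 2$ and the degenerate $E_j=V$ case are unnecessary (every $e_j$ has at least the two neighbors $u_j$ and $u^*$), but they do no harm.
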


\begin{proof}
    This holds since no vertex in $I$ is the middle vertex of a $P_3$ in $G$.
\end{proof}

\begin{lemma}\label{lem:geo-U-missing}
    Only the vertices in $U$ are not covered by the pairs of vertices in $I$.
\end{lemma}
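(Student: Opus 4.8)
The plan is to exploit the reformulation recorded just before the lemma: since $u^*$ is universal, $\diam(G)\le 2$, so a vertex $x$ is covered by a pair $\{a,b\}$ with $a\neq x\neq b$ if and only if $x$ is the middle vertex of an induced $P_3$ with endpoints $a$ and $b$; and of course $x$ is trivially covered by any pair of $I$ that contains $x$. Since $I=H\cup\{e^*\}$ is an independent set of size $m+1\ge 2$, being ``the middle of an induced $P_3$ with both endpoints in $I$'' simplifies here to ``having two distinct neighbours in $I$''. So the task reduces to checking, vertex class by vertex class, who has two distinct neighbours in $I$ (plus the trivial fact that the vertices of $I$ themselves are covered).

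First I would dispatch the easy cases. Every vertex of $I$ is covered by a pair of $I$, being an endpoint of the trivial path. The vertex $u^*$ is adjacent to all of $I$, so $e_1$, $e^*$ (two distinct, hence non-adjacent, vertices of $I$) give an induced $P_3$ centred at $u^*$. For $v_i\in V$: by construction $v_i$ is adjacent to $e^*$, and by the standing assumption that no vertex of $\h$ lies in every edge there is an index $j$ with $v_i\notin E_j$, so $v_i$ is adjacent to $e_j$ in the non-incidence bipartite graph; as $e^*,e_j\in I$ are non-adjacent, $e^*\,v_i\,e_j$ is an induced $P_3$ and $v_i$ is covered.

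It then remains to show that no $u_j\in U$ is covered by a pair of $I$. Reading off the construction, the only neighbour of $u_j$ inside $I$ is $e_j$: indeed $u_j$ is adjacent to $e_j$, to $u^*$, and to $(U\cup V)\setminus\{u_j\}$, none of the last two intersecting $I$. Hence $u_j$ is neither an endpoint of a pair from $I$ (as $u_j\notin I$) nor the centre of an induced $P_3$ with both endpoints in $I$ (this would require two distinct neighbours of $u_j$ in $I$), and as these are the only two ways a vertex can lie on a shortest path between two others in a graph of diameter $2$, $u_j$ is not covered.

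I do not expect a serious obstacle here; the two points needing care are (i) recognising that the assumption ``no vertex of $\h$ appears in every edge'' is precisely what prevents a vertex $v_i$ from having $e^*$ as its sole neighbour in $I$, and (ii) correctly extracting from the construction that each $u_j$ has a unique neighbour in $I$. The rest is a short enumeration of the four vertex classes $I$, $\{u^*\}$, $V$, and $U$.
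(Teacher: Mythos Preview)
Your proof is correct and follows essentially the same approach as the paper's own proof: use the $P_3$ reformulation, note that vertices of $I$ are self-covered, that each $x\in V\cup\{u^*\}$ has two distinct neighbours in $I$ (namely $e^*$ and some $e_j$, the latter existing thanks to the ``no vertex in every edge'' assumption), and that each $u_j$ has a unique neighbour $e_j$ in $I$ and hence cannot be the centre of such a $P_3$. The paper merely bundles $u^*$ together with $V$ in one sentence, while you separate them; otherwise the arguments are identical.
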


\begin{proof}
    Clearly, all the vertices of $I$ are self-covered.
    Recall that $\h$ is assumed to contain at least one edge, and no vertex of $\h$ appears in every edge.
    Thus, if $x\in V\cup \{u^*\}$, then there exists $e\in H$ adjacent to $x$ and the pair $e,e^*$ covers it.
    Now, since no $P_3$ having its endpoints in $I$ contains a vertex of $U$, we conclude that only the vertices in $U$ are not covered by the pairs of vertices in $I$, as desired.
\end{proof}

\begin{lemma}\label{lem:u-dinstinguishes}
    The set $I\cup\{u\}$ is a minimal geodetic set of $G$ for every $u\in U$.
\end{lemma}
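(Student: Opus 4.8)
The plan is to verify the two defining conditions separately: that $I\cup\{u\}$ is a geodetic set, and that it is inclusion-wise minimal. Throughout I would use the reformulation already noted before these lemmas, namely that since $G$ has diameter $2$ (because of the universal vertex $u^*$), a pair $a,b$ covers a vertex $x$ exactly when $x\in\{a,b\}$ or $x$ is the middle vertex of an induced $P_3$ with endpoints $a$ and $b$.

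For the geodetic part, I would start from Lemma~\ref{lem:geo-U-missing}, which already guarantees that every vertex of $G$ outside $U$ is covered by a pair of vertices of $I$, hence by a pair of vertices of $I\cup\{u\}$. It then only remains to cover the vertices of $U$. The vertex $u$ is covered by itself. For any other vertex $u'\in U$, write $u'=u_{j'}$; then $e_{j'}\in H\subseteq I$ is adjacent to $u'$ by construction, is non-adjacent to $u$ (since among the vertices of $U$ the only neighbour of $e_{j'}$ is $u_{j'}=u'$), and $u'$ is adjacent to $u$ because both lie in the clique $K$. Thus $e_{j'}\,u'\,u$ is an induced $P_3$, so the pair $e_{j'},u$ covers $u'$. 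This shows $I\cup\{u\}$ is a geodetic set.

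For minimality, it suffices to check that deleting any single vertex destroys the property (geodetic sets being closed upward). Deleting $u$ leaves exactly $I$, which by Lemma~\ref{lem:geo-U-missing} does not cover the set $U$, and $U\neq\emptyset$ since $m\geq 1$; hence $I$ is not geodetic. Deleting some $x\in I$ leaves a set not containing all of $I$, which by Lemma~\ref{lem:geo-H-included} cannot be a geodetic set. Therefore no proper subset of $I\cup\{u\}$ is geodetic, and $I\cup\{u\}$ is a minimal geodetic set.

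The proof is short and amounts to a double appeal to the two preceding lemmas; the only point requiring a moment's care is checking that the path $e_{j'}\,u'\,u$ is genuinely induced, i.e.\ that $e_{j'}$ and $u$ are non-adjacent, which is immediate from the fact that each $u_j$ was joined only to $e_j$ before $U\cup V$ was completed into a clique. I do not anticipate any real obstacle beyond this bookkeeping.
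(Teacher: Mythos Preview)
Your proof is correct and follows the same approach as the paper: invoke Lemmas~\ref{lem:geo-H-included} and~\ref{lem:geo-U-missing}, and observe that for $u'\in U\setminus\{u\}$ the induced $P_3$ $e\,u'\,u$ (with $e$ the unique neighbour of $u'$ in $H$) covers $u'$. The paper compresses both the geodetic and minimality parts into a single sentence, whereas you spell them out explicitly, but the argument is identical.
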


\begin{proof}
    This follows by Lemmas~\ref{lem:geo-H-included} and~\ref{lem:geo-U-missing} by observing that, for any $u'\in U$ with $u'\neq u$, we have a $P_3$ $eu'u$ for $e$ being the unique neighbor of $u'$ in $H$.
\end{proof}

We may now characterize minimal geodetic sets that are of interest as far as the transversality of $\h$ is concerned.

\begin{lemma}\label{lem:geo-tr}
    Let $S$ be a minimal geodetic set of $G$ such that $S\cap U=\emptyset$.
    Then, $S\cap K$ is a minimal transversal of $\h$.
\end{lemma}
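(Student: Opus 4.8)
The statement to prove is: if $S$ is a minimal geodetic set of $G$ with $S\cap U=\emptyset$, then $S\cap K$ is a minimal transversal of $\h$. Recall $K=U\cup V\cup\{u^*\}$ and $I=H\cup\{e^*\}$, and by Lemma~\ref{lem:geo-H-included} we already know $I\subseteq S$. So, writing $S\cap K$ as the ``interesting'' part, the plan is first to show $S\cap K\subseteq V$ (i.e.\ $u^*\notin S$), then to show $S\cap V$ is a transversal of $\h$, and finally to show minimality.

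**Plan of the proof.** The hypothesis $S\cap U=\emptyset$ together with $I\subseteq S$ (Lemma~\ref{lem:geo-H-included}) forces $S\subseteq I\cup V\cup\{u^*\}$, so $S\cap K\subseteq V\cup\{u^*\}$. I would first argue that no vertex of $U$ is covered by a pair inside $I\cup\{u^*\}$: indeed $u^*$ is universal, so a $P_3$ through some $u_j\in U$ with middle vertex $u_j$ would need two non-adjacent neighbors of $u_j$; but $N(u_j)=\{e_j,u^*\}$ (in $K$, $u_j$ is adjacent to all of $U\cup V\cup\{u^*\}$, and in $I$ only to $e_j$), and we must check which of these pairs are non-adjacent — $e_j$ and $u^*$ are adjacent, so $u_j$ is never a middle vertex, hence self-coverage only. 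Thus each $u_j$ must be covered as the middle of a $P_3$ with endpoints in $S$, and since $S\cap U=\emptyset$, the only vertices that can play this role are in $V$ (the endpoints $a,b$ with $a u_j b$ a $P_3$ need $a,b\in N(u_j)$ non-adjacent; a vertex of $V$ is non-adjacent to $e_j$ exactly when $v\in E_j$, by the non-incidence construction, while $u^*$ is adjacent to everything). In particular $u^*$ alone cannot cover any $u_j$, and more importantly this shows: for $u_j$ to be covered, some $v\in S\cap V$ with $v\in E_j$ must exist — i.e.\ $S\cap V$ hits every edge $E_j$, so $S\cap V$ is a transversal of $\h$. Since every vertex of $G$ other than those in $U$ is already covered by pairs in $I$ (Lemma~\ref{lem:geo-U-missing}), the set $I\cup(S\cap V)$ is geodetic as soon as $S\cap V$ is a transversal, and in fact $u^*$ is redundant for coverage; combined with minimality of $S$ this gives $u^*\notin S$, so $S\cap K=S\cap V$.

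**Minimality of the transversal.** For minimality: suppose $v\in S\cap V$ were removable as a transversal, i.e.\ $S\cap V\setminus\{v\}$ still hits every edge. Then for every $u_j\in U$, some $v'\in (S\cap V)\setminus\{v\}$ lies in $E_j$, hence $v' u_j$-non-adjacent, giving a $P_3$ $v' u_j e_j$ covering $u_j$. All other vertices are covered by $I$-pairs (Lemma~\ref{lem:geo-U-missing}), so $S\setminus\{v\}$ would still be geodetic, contradicting minimality of $S$. Hence $S\cap V=S\cap K$ is a minimal transversal of $\h$. (The assumption that no vertex of $\h$ appears in every edge is what guarantees, via Lemma~\ref{lem:geo-U-missing}, that $V\cup\{u^*\}$ vertices are covered by $I$-pairs and plays no hidden role here; the assumption $m\geq 1$ ensures $U\neq\emptyset$ so the transversal condition is non-vacuous.)

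**Main obstacle.** The only delicate point is the bookkeeping of adjacencies in this split graph: correctly identifying which pairs form induced $P_3$'s with a given $u_j$ as the middle vertex (this uses that $e_j\sim u^*$ and that $v\sim e_j \iff v\notin E_j$), and cleanly separating ``$u_j$ is self-covered'' from ``$u_j$ is covered as a middle vertex''. Once the neighborhoods are pinned down, each implication is short. A secondary subtlety is ensuring $u^*\notin S$: this needs the observation that removing $u^*$ from $S$ loses no coverage (every vertex $u^*$ could help cover is already covered by $I$ or by $V$-vertices), which follows from Lemma~\ref{lem:geo-U-missing} and the transversality just established.
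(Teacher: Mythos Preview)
Your proof is correct and follows essentially the same route as the paper's: use Lemmas~\ref{lem:geo-H-included} and~\ref{lem:geo-U-missing} to reduce the task to covering the vertices of $U$, observe that $u^*$ is redundant by universality (so $S\cap K\subseteq V$), identify that the only $P_3$'s with $u_j$ as middle vertex and endpoints in $S$ are of the form $e_j u_j v$ with $v\in E_j$, and deduce transversality and then minimality by the same contradiction. A few phrasings deserve tightening---``$N(u_j)=\{e_j,u^*\}$'' should read ``$N(u_j)\cap(I\cup\{u^*\})=\{e_j,u^*\}$'', ``$u_j$ is never a middle vertex'' should be qualified to ``with both endpoints in $I\cup\{u^*\}$'', and ``$v' u_j$-non-adjacent'' is a slip for ``$v'$ non-adjacent to $e_j$''---but the logic is sound and matches the paper.
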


\begin{proof}
    By Lemmas~\ref{lem:geo-H-included} and \ref{lem:geo-U-missing}, we have that $I \subseteq S$, and that only the elements in $U$ are not covered by the pairs of vertices in $I$.
    Let $T\defeq S\cap K$.
    Since $S\cap U=\emptyset$ and $u^*$ is adjacent to every vertex in $G$, we derive that $T\subseteq V$.
    Now, in order to cover $u_j$  $(j\in[m])$ it must be that a vertex from $T$ is not adjacent to $e_j$, as the only $P_3$ having $u_j$ as a middle vertex contains $e_j$.
    Hence, $T$ defines a transversal of $\h$ whenever the pairs of vertices in $T$ cover every such $u_j$.
    If $T$ was not minimal, then removing a vertex $v$ from $T$ would still intersect every edge of $\h$, which in turn would still cover every $u\in U$, a contradiction. 
\end{proof}

\begin{lemma}\label{lem:tr-geo}
    If $T$ is a minimal transversal of $\h$, then $T\cup I$ is a minimal geodetic set of~$G$.
\end{lemma}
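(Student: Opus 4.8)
The plan is to show the two things that "minimal geodetic set" requires: that $T\cup I$ covers every vertex of $G$, and that no proper subset does. For the covering part, I would argue vertex by vertex using the $P_3$-middle-vertex reformulation of being covered (valid since $u^*$ is universal and $\diam(G)\le 2$). The vertices of $I=H\cup\{e^*\}$ are self-covered. For a vertex $x\in V\cup\{u^*\}$, Lemma~\ref{lem:geo-U-missing} already tells us it is covered by a pair inside $I$, hence by a pair inside $T\cup I$. The only vertices left to handle are those of $U$, which by Lemma~\ref{lem:geo-U-missing} are precisely the ones \emph{not} covered by pairs in $I$. So fix $u_j\in U$; its unique neighbor in $I$ is $e_j$, and $u_j$ is adjacent to all of $K=U\cup V\cup\{u^*\}$. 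Since $T$ is a transversal of $\h$, there is some $v\in T\cap E_j$, and by construction of the non-incidence bipartite graph $v$ is \emph{not} adjacent to $e_j$; thus $e_j\,u_j\,v$ is an induced $P_3$ with $u_j$ in the middle, so the pair $\{e_j,v\}\subseteq T\cup I$ covers $u_j$. This shows $T\cup I$ is a geodetic set.

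For minimality, suppose some $s\in T\cup I$ is removed. If $s\in I$: by Lemma~\ref{lem:geo-H-included} (or rather the observation in its proof) $s$ is not the middle vertex of any $P_3$, so the only pair that could cover $s$ is one containing $s$ itself; hence $(T\cup I)\setminus\{s\}$ fails to cover $s$. If $s=v\in T$: by minimality of $T$ as a transversal, there is an edge $E_j$ with $E_j\cap T=\{v\}$; I claim $u_j$ is then no longer covered. Indeed the only $P_3$ with $u_j$ as middle vertex must use $e_j$ as one endpoint (that is the only neighbor of $u_j$ outside $K$, and within $K$ every vertex is adjacent to $u_j$, so no two clique-vertices together with $u_j$ form an induced $P_3$); and the other endpoint must be a non-neighbor of $e_j$ lying in the current set, i.e.\ a vertex of $E_j$ in $(T\cup I)\setminus\{v\}$ — but $E_j\cap T=\{v\}$, so there is none. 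Hence removing $v$ destroys the covering of $u_j$.

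I expect the covering of the $U$-vertices to be the only substantive step, and it is exactly the place where the transversality of $T$ is used; the rest is bookkeeping with the $P_3$-characterization. The one point to state carefully is why, for $u_j$, every covering $P_3$ must have $e_j$ as an endpoint and a vertex of $E_j$ as the other — this is where the split structure ($u_j$ complete to the clique $K$, and $e_j$ its only neighbor in the independent set $I$) does the work, and it is the same fact that powers both the minimality argument and the bijection claimed before the lemma.
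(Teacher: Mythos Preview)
Your proposal is correct and follows essentially the same approach as the paper: use Lemma~\ref{lem:geo-U-missing} to reduce coverage to the vertices of $U$, cover each $u_j$ via the $P_3$ $e_j u_j v$ with $v\in T\cap E_j$, and then argue minimality by invoking Lemma~\ref{lem:geo-H-included} for elements of $I$ and the minimality of $T$ for elements of $T$. The only cosmetic difference is that the paper phrases the $v\in T$ case as a contradiction (if $S\setminus\{x\}$ were geodetic then $T\setminus\{x\}$ would still be a transversal), whereas you exhibit a concrete uncovered witness $u_j$; these are the same argument.
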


\begin{proof}
    By Lemma~\ref{lem:geo-U-missing}, the pairs of vertices in $I$ cover every vertex of $G$ except for those in $U$.
    Now, since $T$ is a transversal, for every $E_j\in \h$, there exists $v\in T$ such that $v\in E_j$, and it follows that $v$ is not adjacent to $e_j$ and $e_ju_jv$ defines a $P_3$.
    Thus, every $u_j$ is covered and we conclude that $S\defeq T\cup I$ is a geodetic set.
    Let us assume that it is not minimal and let $x\in S$ such that $S\setminus \{x\}$ is still a geodetic set.
    By Lemma~\ref{lem:geo-H-included}, it cannot be that $x\in I$, and thus, it must be that $x\in T$.
    Then, for every $u_j$ $(j\in[m])$, there exists a pair $e_j,v$ with $v\in T\setminus \{x\}$ such that $e_ju_jv$ forms a $P_3$.
    Hence, for every $E_j\in \h$, there exists $v\in T\setminus \{x\}$ such that $v\in E_j$, a contradiction to the minimality of $T$.
\end{proof}

\begin{theorem}\label{thm:tr-mingeo}
	There is a polynomial-delay algorithm for \transenum{} whenever there is one for \mingeo{} on split graphs.
\end{theorem}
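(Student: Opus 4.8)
The plan is to mirror the proof of Theorem~\ref{thm:tr-minres}, using the four lemmas just established. Given an algorithm $\algoa$ for \mingeo{} on split graphs running with polynomial delay $f(n')$, where $n'=|V(G)|$, I would first build the split graph $G$ from $\h$ in time polynomial in $n+m$ (this is immediate from the construction), then simulate $\algoa$ on $G$ and post-process its outputs into solutions of \transenum{}.

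The first substantive step is to pin down the complete list of minimal geodetic sets of $G$. Lemmas~\ref{lem:geo-H-included} and~\ref{lem:geo-U-missing} force $I\subseteq S$ and say that the only vertices left to be covered are those of $U$. Looking at the induced $P_3$'s through a vertex $u_j\in U$ (its neighbourhood is $(K\setminus\{u_j\})\cup\{e_j\}$, and $K$ is a clique), covering $u_j$ amounts to $S$ containing some other $u_{j'}$ or some $v_i\in E_j$. Consequently, as in Lemma~\ref{lem:u-dinstinguishes}, for any $u\in S\cap U$ the set $I\cup\{u\}$ is already geodetic, so minimality forces $S=I\cup\{u\}$ for a single $u\in U$; and by Lemmas~\ref{lem:geo-tr} and~\ref{lem:tr-geo}, the minimal geodetic sets with $S\cap U=\emptyset$ are exactly the $I\cup T$ with $T\in Tr(\h)$. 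Thus the minimal geodetic sets of $G$ split into the $O(m)$ ``garbage'' sets $\{I\cup\{u\}:u\in U\}$ and the family $\{I\cup T:T\in Tr(\h)\}$, the latter being in bijection with $Tr(\h)$.

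With this characterisation in hand, the algorithm $\algob$ for \transenum{} runs $\algoa$ on $G$ and, for each produced set $S$, discards it if $S\cap U\neq\emptyset$ and otherwise outputs $S\cap V$. Correctness and the absence of repetitions are immediate from the bijection, so — unlike in Theorem~\ref{thm:tr-minres} — no duplicate checking is needed. For the running time, $\algoa$ spends at most $f(n')$ steps between consecutive sets and produces at most $m$ garbage sets over the whole run, so the $i^\text{th}$ transversal is emitted after $O((m+i)\cdot f(n'))$ steps past the (one-time) construction; this dependence on $i$ is linear, hence $\algob$ runs in incremental-polynomial time. To upgrade this to genuine polynomial delay, I would apply the same folklore regularisation trick as in Theorem~\ref{thm:tr-minres} (see~\cite[Proposition~3]{capelli2023geometric}): buffer the produced transversals in a queue and release one every $\Theta(m\cdot f(n'))$ steps.

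The only mildly delicate point is the structural claim that every minimal geodetic set meeting $U$ is of the form $I\cup\{u\}$ for a single $u$; everything else is bookkeeping essentially identical to the resolving-set case. Since that claim follows quickly from the $P_3$-analysis already underlying Lemmas~\ref{lem:geo-U-missing} and~\ref{lem:u-dinstinguishes}, I do not expect a real obstacle here — if anything, this reduction is cleaner than its counterpart in Section~\ref{sec:geo}'s sibling section, precisely because the correspondence with $Tr(\h)$ is a bijection rather than many-to-one.
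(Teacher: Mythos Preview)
Your proposal is correct and follows essentially the same approach as the paper. The only difference is that the paper dispenses with the regularisation step: since there are only $m$ garbage sets \emph{in total}, the delay between any two consecutive outputs of the form $I\cup T$ is already bounded by $(m+1)\cdot f(n')$, which is polynomial, so no buffering is needed.
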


\begin{proof}
    This is a consequence of the fact that the graph $G$ can be constructed in polynomial time in the size of $\h$, has polynomial size, and that it contains $m$ minimal geodetic sets $I\cup\{u\}$ with $u\in U$. 
    All the other minimal geodetic sets of $G$ are of the form $I\cup T$ where $T$ is a minimal transversal of $\h$.
    Hence, a polynomial-delay algorithm for \mingeo{} would take at most $m$ times its delay between two consecutive minimal geodetic sets of the form $I\cup T$.
\end{proof}

We now argue that a polynomial-delay algorithm for \transenum{} yields one for \mingeo{} on split graphs.
Let $G$ be a split graph of bipartition $(K,I)$ with $K$ the clique and $I$ the independent set.
Among all such partitions we consider one that maximizes the size of $I$ and we may furthermore assume that $|I|\geq 2$, as otherwise the instance is trivial.
As in Lemma~\ref{lem:geo-H-included}, let us first note that $I\subseteq S$ for any geodetic set $S$ of $G$ as the neighborhood of every vertex $x\in I$ is a clique.
By the maximality of $I$, every vertex $v\in K$ that is not covered by a pair of vertices in $I$ has precisely one neighbor $u\in I$ and the set of vertices at distance at most $2$ from $u$ contains the full set $I$ as a subset.
Indeed, if it was not the case, then $v$ would be covered by the pair $u,w$, where $w\in I$ is a vertex at distance three from $u$.
Thus, to cover $v$ we must either pick $v$ or intersect $K\setminus N(u)$ the non-neighborhood of $u$ in $K$.
We identify all such vertices $v_1,\dots, v_k$ and their only neighbors $u_1,\dots,u_k$ in $I$ to construct a hypergraph $\h$ on vertex set $K$ with an edge $E_i=\{K \setminus N(u_i)\} \cup \{v_i\}$ for every $i\in [k]$.
We note that possibly $u_i=u_j$ for distinct $i,j\in[k]$, which is of no concern in the following.
Clearly, the construction can be achieved in polynomial time in the size of $G$, and by the above remarks we obtain a bijection between the minimal transversals of $\h$ and the minimal geodetic sets of $G$.
This yields the next theorem.

\begin{theorem}\label{thm:geo-split}
    There is a polynomial-delay algorithm for \mingeo{} on split graphs whenever there is one for \transenum{}.
\end{theorem}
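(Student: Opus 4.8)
The plan is to turn the construction outlined in the paragraph preceding the statement into a solution-preserving---indeed bijective---reduction, and then to observe that bijectivity makes polynomial-delay preservation immediate. First I would fix a split partition $(K,I)$ of the input graph $G$ with $K$ a clique, $I$ an independent set, and $|I|$ as large as possible; the case $|I|\le 1$ is trivial. Since the neighborhood of every vertex of $I$ is a clique, no vertex of $I$ can be the middle vertex of an induced $P_3$, so $I\subseteq S$ for every geodetic set $S$ of $G$ (the same argument as in Lemma~\ref{lem:geo-H-included}); thus the only remaining freedom is which subset $T\subseteq K$ to add, and the only vertices still requiring coverage are those $v\in K$ not covered by any pair of vertices of $I$---call these the \emph{bad} vertices.

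The structural heart of the argument is to pin down the bad vertices. I would show, using maximality of $I$, that a bad vertex $v$ has exactly one neighbor $u\in I$: it has at least one (otherwise $I\cup\{v\}$ would be a larger independent set), and at most one (otherwise two such neighbors would form an induced $P_3$ through $v$, covering it). Moreover, all of $I$ lies within distance $2$ of $u$, since a vertex $w\in I$ at distance $3$ from $u$ would, together with $u$, cover $v$. From this I would deduce that a bad $v$ can only be covered either by putting $v$ itself into $S$ or by putting into $S$ some vertex of $K\setminus N(u)$; both options are exactly encoded by the hyperedge $E=(K\setminus N(u))\cup\{v\}$. Listing all bad vertices $v_1,\dots,v_k$ together with their unique $I$-neighbors $u_1,\dots,u_k$ (repetitions among the $u_i$ being harmless) defines, in polynomial time, a hypergraph $\h$ on vertex set $K$ with edges $E_i=(K\setminus N(u_i))\cup\{v_i\}$.

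Next I would establish the bijection $S\mapsto S\cap K$ between the minimal geodetic sets of $G$ and the minimal transversals of $\h$. For a minimal geodetic set $S$: $I\subseteq S$, and $T\defeq S\cap K$ must hit every $E_i$ in order to cover every bad $v_i$; conversely, minimality of $S$ forces minimality of $T$, because dropping from $T$ a vertex that is removable as a transversal would leave some $v_i$ uncovered. For a minimal transversal $T$ of $\h$: the pairs inside $I$ cover all non-bad vertices, each bad $v_i$ is covered because $T$ meets $E_i$, and minimality of $T\cup I$ follows since $I$ is forced and any removable $v\in T$ would witness a strictly smaller transversal. The delicate point---and the step I expect to be the main obstacle---is precisely the bad-vertex analysis above: verifying that maximality of $I$ really does force a single $I$-neighbor whose distance-$2$ ball contains all of $I$, and that the edge $E_i$ captures every covering possibility (including the self-covering option $v_i\in S$) without creating spurious transversals. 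Granting the bijection, a polynomial-delay algorithm for \transenum{} applied to $\h$ yields one for \mingeo{} on split graphs: simply output $T\cup I$ for each minimal transversal $T$ produced; since the correspondence is one-to-one there is no blow-up in the number of solutions and only polynomial overhead per solution, so---unlike in Theorem~\ref{thm:tr-minres}---no delay-regularization trick is needed.
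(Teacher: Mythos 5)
Your proposal matches the paper's proof: fix a split partition $(K,I)$ maximizing $|I|$, observe $I\subseteq S$ for every geodetic set $S$, use maximality to show each uncovered $v\in K$ has a unique $I$-neighbor $u$ whose distance-$2$ ball contains $I$, encode the covering constraint as the hyperedge $(K\setminus N(u))\cup\{v\}$, and conclude with the bijection $S\leftrightarrow S\cap K$. Your observation that the bijection makes regularization unnecessary is also exactly the point the paper relies on (implicitly), and the extra detail you supply for the at-least-one/at-most-one $I$-neighbor claims is correct.
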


We now end the section by showing that the general case of \mingeo{} reduces to enumerating all the minimal flats of the graphic matroid associated to the clique $K_n$ that are transversals of a given $n$-vertex hypergraph.

Let us start with the construction.
We consider a graph $G$ and construct a hypergraph $\h$ whose vertices are (unordered) pairs of distinct vertices of $G$, denoted $uv$ instead of $\{u,v\}$ for convenience, and where every vertex $v$ of $G$ gives rise to an edge $E_v\defeq \{xy : v\text{ is on a shortest $x$--$y$ path}\}$ in $\h$.
To avoid ambiguity, we shall refer to the vertices of $\h$ as \emph{nodes}, and use variables $r,s,t$ for nodes in the following.
Then, $\h$ has $O(n^2)$ nodes and $O(n)$ edges.
Clearly, every transversal $T$ of $\h$ induces a geodetic set $\bigcup_{t\in T} t$ of $G$, as every $E_v$ is hit by a pair $t$ in $T$, and that pair covers $v$ in $G$.
Unfortunately, minimal transversals of $\h$ do not necessarily define minimal geodetic sets of $G$ in that way, and not every minimal geodetic set of $G$ defines a minimal transversal of $\h$ by considering all the pairs of elements contained in it. 
Consider for example the graph $G$ obtained from a triangle $abc$ by adding a pendent vertex $d$ adjacent to $c$. 
Then, $\{a,b,d\}$ is the only minimal geodetic set of $G$, while $\{ab,ad,bd\}$ is easily verified to be a transversal of $\h$ that is \emph{not} minimal as it contains the transversal $\{ad, bd\}$ as a subset. On the other hand, $\{ac, bd\}$ can be checked to be a minimal transversal of $\h$, while $\{a,b,c,d\}$ is not a minimal geodetic set of $G$.
We nevertheless show that consistent sets that are transversals of $\h$ are in bijection with the geodetic sets of $G$ for an appropriate notion of consistency.

In the following, we call a subset $U$ of nodes of $\h$ \emph{consistent} if, whenever two distinct nodes $r,s \in U$ are such that $r\cap s\neq \emptyset$, then the unique other node $t$ such that $r\cup s = s\cup t = r\cup t$ is also part of $U$. 
In other words, a set of pairs is consistent if and only if it is the family of all pairs of some set.
As an example, a subset $U$ containing $ab$ and $bc$ but not $ac$ is not consistent, while the set $U=\{ab, ad, bd\}$ or the set of all nodes of $\h$ are consistent.
More generally, the family of all pairs of a given set is consistent.
The aforementioned correspondence is the following.

\begin{theorem}\label{thm:consistent-transversals}
    There is a bijection between the minimal geodetic sets of $G$ and the minimal consistent subsets of nodes that are  transversals of $\h$.
\end{theorem}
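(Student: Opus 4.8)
The plan is to exhibit the bijection explicitly via the natural map $\Phi$ that sends a vertex set $S\subseteq V(G)$ to the family $\binom{S}{2}$ of all unordered pairs of distinct elements of $S$, and to show that $\Phi$ restricts to a bijection between minimal geodetic sets of $G$ and minimal consistent node-subsets of $\h$ that are transversals. First I would record the two halves of the ``dictionary'': (i) for any $S\subseteq V(G)$ the set $\binom{S}{2}$ is consistent (this is essentially the definition of consistent, since a consistent family is exactly the pair-family of some vertex set), and $\binom{S}{2}$ is a transversal of $\h$ if and only if $S$ is a geodetic set of $G$ (each edge $E_v$ is hit by some pair $xy\in\binom{S}{2}$ precisely when some pair of vertices of $S$ covers $v$); (ii) conversely, every consistent node-subset $U$ is of the form $\binom{S_U}{2}$ for a unique vertex set $S_U$, namely $S_U=\bigcup_{t\in U}t$ when $|U|\geq 1$, with the small-cardinality cases ($U=\emptyset$ corresponding to $|S|\leq 1$) handled separately or excluded by the standing assumptions. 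These two observations already give that $\Phi$ is a well-defined bijection between geodetic sets of $G$ of size at least $2$ and consistent transversals of $\h$; the content of the theorem is that it is inclusion-order-preserving in the right way so that it matches up the \emph{minimal} elements on both sides.

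The key step is therefore to check that $\Phi$ and its inverse are monotone with respect to containment, \emph{when restricted to consistent sets}. In one direction this is immediate: if $S\subseteq S'$ then $\binom{S}{2}\subseteq\binom{S'}{2}$. In the other direction, if $U\subseteq U'$ are both consistent then $S_U\subseteq S_{U'}$, since $S_U=\bigcup_{t\in U}t\subseteq\bigcup_{t\in U'}t=S_{U'}$. Consequently, for consistent transversals $U$ and $U'$ of $\h$, $U\subsetneq U'$ if and only if $S_U\subsetneq S_{U'}$. From this, minimality transfers in both directions: $S$ is a minimal geodetic set of $G$ iff $\binom{S}{2}$ is a transversal of $\h$ (equivalently $S$ geodetic) and no consistent transversal is strictly contained in it, iff $\binom{S}{2}$ is a minimal consistent transversal of $\h$; and symmetrically using $U\mapsto S_U$. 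One must be slightly careful that minimality among consistent transversals is the right notion -- the statement is about inclusion-minimal consistent sets that are transversals, not about sets that are minimal transversals and happen to be consistent -- and the example in the text with the triangle plus pendant vertex is exactly there to flag this subtlety; the monotonicity argument above respects the ``minimal among consistent transversals'' reading, which is the one claimed.

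The main obstacle I expect is bookkeeping at the boundary: making sure the standing assumptions (in particular that $G$ has at least, say, two vertices, and that no degenerate geodetic set of size $\leq 1$ occurs) rule out the edge cases where $\binom{S}{2}$ is empty or a singleton, so that the map $U\mapsto S_U$ is genuinely inverse to $S\mapsto\binom{S}{2}$ on the relevant domain, and to double-check the ``unique other node $t$'' clause in the definition of consistency really does force a consistent set to be a full pair-family (one needs that from $ab,bc\in U$ consistency yields $ac\in U$, and then an induction on $|S_U|$ shows $U=\binom{S_U}{2}$). None of this is deep, but it is the part where an off-by-one or a missed degenerate case could break the bijection, so I would state it as an explicit auxiliary claim: \emph{a node-subset $U$ of $\h$ is consistent if and only if $U=\binom{W}{2}$ for some $W\subseteq V(G)$}, prove it by induction on $|U|$, and then the theorem follows by combining it with the transversal $\Leftrightarrow$ geodetic equivalence and the monotonicity of $\Phi$ on consistent sets.
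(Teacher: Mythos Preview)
Your proposal is correct and follows essentially the same approach as the paper: the paper also uses the maps $S\mapsto\binom{S}{2}$ and $U\mapsto\bigcup_{t\in U}t$, relies on the characterization (stated just before the theorem) that a node-set is consistent if and only if it is the family of all pairs of some vertex set, and checks the two minimality directions by the same monotonicity/containment reasoning you outline. The only cosmetic difference is that the paper writes out the two directions as separate contradiction arguments rather than packaging them as ``order-isomorphism preserves minimal elements''; the underlying ideas and the auxiliary facts used are identical.
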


\begin{proof}
    Let $S$ be a minimal geodetic set of $G$ and consider the set $T$ of all pairs of vertices in $S$.
    Since every vertex $v$ in $G$ is covered by a pair of vertices in $S$, every edge $E_v$ in $\h$ is hit by a pair of $T$.
    As $T$ is consistent by construction, we conclude that it is a consistent transversal of $\h$.
    Let us assume toward a contradiction that it is not minimal with that property, and let $T'$ be a minimal consistent proper subset of $T$ that is a transversal of $\h$.
    Let $S'$ be the union of all pairs in $T'$.
    As $T'\subset T$ and both $T$ and $T'$ are consistent, $S'\subset S$.
    Then, by the minimality of $S$, there must be a vertex $v$ in $G$ that is not covered by any pair of $S'$.
    As $T'$ is only constituted of the pairs of elements in $S'$, we conclude that $E_v$ is not intersected by~$T'$, and hence, that it is not a transversal, a contradiction.

    Let $T$ be a minimal consistent transversal of $\h$ and consider the union $S$ of all pairs in $T$.
    Since every edge $E_v$ in $\h$ is hit by a pair $t$ in $T$, each vertex $v$ in $G$ is covered by a pair of vertices in $S$.
    Thus, $S$ is a geodetic set of $G$.
    Let us assume that it is not minimal and let $x$ be such that $S'\defeq S\setminus \{x\}$ is a geodetic set.
    Consider the family $T'$ of pairs of $S'$.
    Since $S'$ is a geodetic set, every edge $E_v$ in $\h$ is hit by a pair in $T'$.
    However, by the construction of $T'$ and as $T$ is consistent, we derive that $T'\subset T$, contradicting the minimality of $T$.
\end{proof}

We now discuss the implications of Theorem~\ref{thm:consistent-transversals}.
Observe that the consistency of a subset of vertices of $\h$ as defined above may also be expressed as satisfying a set of implications $\Sigma\defeq \{r,s \to t : r\cup s = s\cup t = r\cup t\}$ in the sense that any subset containing the premise of an implication in $\Sigma$ must contain its conclusion.
It is well known that the consistent sets in that context are the closed sets of a lattice~\cite{birkhoff1940lattice,wild2017joy}.
In the particular case of the rules defined above, the lattice is in fact known to be the lattice of flats (subsets of edges that are maximal with respect to the size of their spanning trees) of the graphic matroid associated to the clique on $n$ vertices, or equivalently, to be the lattice of partitions of a finite $n$-element set~\cite{birkhoff1940lattice}.
Consequently, listing minimal consistent transversals in our context may be reformulated as the enumeration of the minimal flats of the matroid associated to the clique $K_n$ that are transversals of $\h$.
To the best of our knowledge, no output-quasi-polynomial-time algorithm is known for that problem.
It should however be noted that in the more general setting where $\Sigma$ is allowed to contain any implications with premises of size at most two, the enumeration is intractable as it generalizes the dualization in lattices given by implicational bases of size at most two
\cite{defrain2020dualization}.

\section{Graphs with no long induced paths}\label{sec:no-lip}

In the previous sections, we showed that \minres{} and \mingeo{} are tough problems as they are at least as hard as \transenum{}, arguably one of the most challenging open problems in algorithmic enumeration to date.
Furthermore, these reductions hold for graphs with no long induced paths.
Namely, it can be easily checked that Theorem~\ref{thm:tr-minres} holds for $P_6$-free graphs, while Theorem~\ref{thm:tr-mingeo} holds for $P_5$-free graphs.
This motivates the study of these problems on instances that do not contain long induced paths.

We show \mingeo{} and \minres{} to be tractable on $P_4$-free graphs using a variant of Courcelle's theorem for enumeration and clique-width~\cite{Courcelle09}. 
We assume the reader to be familiar with MSO logic and clique-width, and refer the reader to~\cite{CourcelleE12} for an introduction.

\begin{theorem}\label{thm:P4free}
    Both \mingeo{} and \minres{} restricted to $P_4$-free graphs admit linear-delay algorithms with a preprocessing using time $O(n\log n)$.
\end{theorem}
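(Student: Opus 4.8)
The plan is to invoke the enumeration variant of Courcelle's theorem due to Courcelle~\cite{Courcelle09}, which states that for a class of graphs of bounded clique-width (for which a $k$-expression can be computed efficiently), the set of all assignments to free set variables satisfying a fixed MSO formula can be enumerated with linear delay after a linear-time preprocessing. Since $P_4$-free graphs (cographs) have clique-width at most $2$, and a corresponding $2$-expression (equivalently, the cotree) can be computed in linear time, it suffices to express ``$S$ is a minimal geodetic set'' and ``$S$ is a minimal resolving set'' as MSO formulas with one free set variable $S$. The $O(n\log n)$ preprocessing accounts for sorting/relabeling the cotree into the form required by the enumeration machinery; alternatively, one could note that cographs have bounded diameter components and reason directly, but routing everything through the logical meta-theorem is cleaner.

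The first key step is to observe that both properties are MSO-expressible \emph{uniformly over cographs}. The subtlety is that ``geodetic'' and ``resolving'' are defined via shortest-path distances, which are not a priori MSO-definable on general graphs; however, a connected cograph has diameter at most $2$ (it is either a single vertex, or a join of smaller cographs, in which case every two non-adjacent vertices have a common neighbor). Hence on each connected component distances take only the values $0,1,2$, and ``$\dist(a,b)=1$'' is just adjacency while ``$\dist(a,b)=2$'' is ``non-adjacent and having a common neighbor''. Both are quantifier-free (or first-order) formulas. I would first write a short lemma formalizing this diameter-$2$ fact for connected cographs, and note that for disconnected cographs distances between different components are infinite, which is again MSO-definable (``in different components'').

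The second step is then to write down the formulas. For \mingeo{}: ``$v$ lies on a shortest $a$--$b$ path'' becomes ``$\dist(a,v)+\dist(v,b)=\dist(a,b)$ with $a\neq v\neq b$'', which, since all quantities are in $\{1,2,\infty\}$, is a finite disjunction of adjacency/common-neighbor conditions; then ``$S$ is geodetic'' is $\forall v\,\exists a\,\exists b\,(a\in S\wedge b\in S\wedge \varphi_{\mathrm{cover}}(a,b,v))$, and minimality is the standard ``$\psi(S)\wedge\forall S'\,(S'\subsetneq S\to\neg\psi(S'))$'' wrapper, which is MSO. For \minres{}: ``$w$ distinguishes $a,b$'' is ``$\dist(a,w)\neq\dist(b,w)$'', again a finite Boolean combination of the three distance predicates, so ``$S$ resolves $G$'' is $\forall a\,\forall b\,(a\neq b\to\exists w\,(w\in S\wedge\varphi_{\mathrm{dist}}(a,w,b)))$, wrapped in the same minimality clause. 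Feeding these fixed formulas and the computed $2$-expression into Courcelle's enumeration theorem yields the linear-delay algorithms.

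The main obstacle I anticipate is \emph{not} the logical encoding, which is routine once the diameter-$2$ observation is in place, but rather (i) being careful that the distance predicates are correct across disconnected cographs (where some pairs are at infinite distance, so one must not accidentally declare such a pair ``covered'' or the covering vertex absent), and (ii) matching the precise input/preprocessing conventions of the enumeration meta-theorem in~\cite{Courcelle09} so that the claimed $O(n\log n)$ preprocessing and linear delay are justified verbatim; in particular one should cite the linear-time cograph recognition/cotree construction and then the enumeration theorem as a black box. A secondary point worth a sentence is that the minimality wrapper uses a universally quantified set variable $S'$, which is allowed in MSO$_1$ and hence fine on bounded clique-width, so no extra work is needed there.
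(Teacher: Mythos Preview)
Your proposal is correct and follows essentially the same approach as the paper: both invoke Courcelle's enumeration meta-theorem~\cite{Courcelle09} for bounded clique-width, use that cographs have clique-width at most~$2$ with a linear-time expression, and exploit the diameter-$2$ bound on connected components to encode the distance predicates (and hence ``minimal geodetic set'' and ``minimal resolving set'') by fixed MSO formulas. The paper phrases the intermediate step slightly more generally (bounded clique-width plus bounded-diameter components), but the instantiation and the argument are the same as yours.
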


\begin{proof}
    We argue that our theorem is a consequence of the meta-theorem from \cite[Corollary 2]{Courcelle09} stating that:
    \begin{itemize}
        \item given a monadic second-order formula $\phi(X_1,\dots,X_k)$, and
        \item a clique-expression of width $p$ expressing a graph $G$,
    \end{itemize}
    we can enumerate in linear delay all the tuples $(A_1,\dots,A_k)\in V(G)^k$ such that $G\models \phi(A_1,\dots,A_k)$ after a preprocessing using time $O(n\log n)$.
    
    Note that, for each $d\in \mathbb{N}$, there exists a first order formula $\phi_{d}(x,y)$ of size $O(d^2)$ testing whether $\dist(x,y)=d$ by testing whether there exists a path of length $d$ between $x$ and $y$ and none of length at most $d-1$.
    Hence, for every $\Delta\in \mathbb{N}$, the monadic second-order formula $\psi(X)=\psi'(X)\land (\forall X'\subset X, \, \neg \psi'(X'))$, where
    $$\psi'(X)\defeq \forall y,z\, (y\neq z)\implies \exists x\in X \, \bigvee_{i\in \{0,\dots,\Delta\}} \phi_i(x,y)\land \neg \phi_i(x,z)$$ 
    has size $O(\Delta^2)$, and, for any graph $G$ whose connected components have diameter at most $\Delta$ and for every $S\subseteq V(G)$, we have $G\models \phi(S) $ if and only if $S$ is a minimal resolving set of $G$.
    We obtain a similar monadic second-order formula for minimal geodetic set by replacing $\psi'(X)$ by the following:
    $$\forall y\, \exists x,z\in X\, \bigvee_{i\in\{0,\dots,\Delta\}} \phi_d(x,z)\land \phi_d(x,y,z),$$
    where $\phi_{d}(x,y,z)$ of size $O(d)$ tests whether $y$ is in a path of length $d$ between $x$ and $z$.
    Hence, the meta-theorem from \cite[Corollary 2]{Courcelle09} leads to the next claim.
    
    \begin{claim}\label{claim:cw}
        Given a clique-width expression of bounded width that defines a graph $G$ whose connected components have bounded diameter, we can solve \mingeo{} and \minres{} with linear delay after a preprocessing using time $O(n\log n)$.
    \end{claim}
    
    Now, we observe that $P_4$-free graphs---a.k.a.~cographs---have clique-width at most 2 \cite{CourcelleO00}, and that a clique-expression of width at most 2 can be computed in linear time \cite{CorneilPS85}.
    Moreover, every connected component of a $P_4$-free graph has diameter at most 2.
    Hence, this theorem is a direct consequence of Claim~\ref{claim:cw}.
\end{proof}

Interestingly, Theorem~\ref{thm:P4free} outlines a dichotomy for \mingeo{} in the sense that the problem is tractable for $P_k$-free graphs when $k\leq 4$, and that it is harder than \transenum{} otherwise.
This relates to similar behaviors and a line of research that emerged in~\cite{bonamy2019kt} on classifying forbidden induced subgraphs for which the enumeration of minimal dominating sets is tractable or harder than \transenum{}.

\section{Extension for minimal geodetic sets}\label{app:geo}

Usually when dealing with an enumeration problem $\Pi$ that asks to list subsets of a ground set $\{v_1,\dots,v_n\}$, a naive approach is to check whether the solutions of $\Pi$ can be constructed element by element, deciding at each step whether we include $v_i$ $(i=1,\dots, n)$ or not in the partial solution, in a way that each partial solution eventually leads to a solution.
This classical approach can be regarded as an efficient particular case of the backtrack search technique~\cite{read1975bounds}, and is usually referred to as \emph{flashlight search}~\cite{boros2004algorithms, khachiyan2007enumerating, capelli2023geometric} as it roughly amounts to looking ahead in the search tree to see whether there are solutions, in order to explore relevant branches only.
It has proved to be successful for a wide variety of very structured problems or restricted instances~\cite{boros2004algorithms, khachiyan2007enumerating,strozecki2019efficient,defrain2019neighborhood}.
Formally, a problem $\Pi$ is known to admit a polynomial-delay algorithm whenever the following problem, known as the \emph{extension problem} for $\Pi$, can be solved in polynomial time in the size of the input.

\begin{problemdec}
  \problemtitle{\textsc{Extension Problem for} $\Pi$ (\textsc{Ext}-$\Pi$)}
  \probleminput{Two disjoint subsets $A,B$ of the ground set.}
  \problemquestion{Is there a solution $S$ to $\Pi$ such that $A\subseteq S$ and $S\cap B = \emptyset$.}
\end{problemdec}

Most of the time, unfortunately, the extension problem is \NP-hard. 
The case of \textsc{Ext}-\transenum{} makes no exception to this rule~\cite{boros1998dual}, even for $\h$ being the family of closed neighborhoods of restricted graph classes~\cite{kante2015chordal,bonamy2019kt}.
In fact, it is even known to be \W[3]-complete parameterized by $|A|$~\cite{BLASIUS2022192,CASEL202248}.
In the following, we will show that the same applies to \mingeo{} for co-bipartite graphs.
This may suggest that generating minimal geodetic sets in that graph class is non-trivial.

Let $(\h,A,B)$ be an instance of \textsc{Ext}-\transenum{}, where $\h$ is a hypergraph on vertex set $\{v_1, \dots, v_n\}$ and edge set $\{E_1, \dots, E_m\}$, and $A$ and $B$ are two disjoint subsets of vertices.
We furthermore assume that $n,m\geq 1$ and that $V(\h)$ is not a minimal transversal of $\h$.
Note that these assumptions can be conducted without loss of generality, as if $V(\h)$ is a minimal transversal, then it is the only one and this can be checked in polynomial time.
We describe the construction of a graph $G$ on $O(n+m)$ vertices and $O(n^2+m^2)$ edges, and two sets $A',B'\subseteq V(G)$ such that there exists a minimal geodetic set $S$ with $A'\subseteq S$ and $S\cap B'=\emptyset$ if and only if there exists a minimal transversal $T$ of $\h$ such that $A\subseteq T$ and $T\cap B=\emptyset$.

We start from the incidence co-bipartite graph of $\h$ with bipartition $V\defeq \{v_1,\dots,v_n\}$ and $H\defeq \{e_1,\dots,e_m\}$, to which we add three vertices $a,b,c$ with $a$ complete to $H$, $b$ complete to $V$ and adjacent to $a$, and $c$ complete to $H\cup V$ and adjacent to $a$.
The obtained graph is co-bipartite with bipartition $(H\cup \{a,c\}, V\cup \{b\})$.
Then, we set $A'\defeq A\cup \{a,b,c\}$ and $B'\defeq B\cup H$.
Note that the diameter of $G$ is at most $2$.
Hence, and as in Section~\ref{sec:geo}, we may reformulate $x$ being on a shortest $s$--$t$ path in $G$ with $s\neq x\neq t$ as $x$ being the middle vertex of a $P_3$.

\begin{figure}
    \centering
    \includegraphics[scale=0.8]{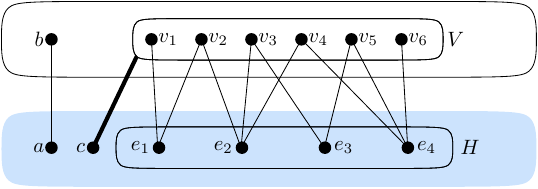}
    \caption{Illustration of the reduction from \textsc{Ext}-\transenum{} to \textsc{Ext}-\mingeo{} with $\h$ consisting of $E_1=\{v_1,v_2\}$, $E_2=\{v_2,v_3,v_4\}$, $E_3=\{v_3,v_5\}$, and $E_4=\{v_4,v_5,v_6\}$. 
    The bold line between $c$ and $V$ mean that $c$ is complete to $V$.
    For legibility, we do not represent the edges of the cliques $V\cup \{b\}$ and $H\cup \{a,c\}$.}
    \label{fig:geodeticExt}
\end{figure}

\begin{lemma}\label{lem:app-geo-tr}
    Let $S$ be a minimal geodetic set containing $A'$ and avoiding $B'$.
    Then, $S\cap V$ is a minimal transversal of $\h$ containing $A$ and avoiding $B$.
\end{lemma}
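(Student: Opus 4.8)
The plan is to unwind the construction and show that the two "covering constraints" — covering every $u_j$-analogue and being a geodetic set — translate exactly into the transversality of $\h$. First I would record the basic structure: since the diameter of $G$ is at most $2$, a vertex $x$ lies on a shortest $s$--$t$ path exactly when $sxt$ is an induced $P_3$. I would then observe where the vertices $a$, $b$, $c$ sit: $a$ is complete to $H$ but not to $V$, $b$ is complete to $V$ but not to $H$, and $c$ is complete to $H\cup V$; in particular $a$ and $b$ are non-adjacent to each other? (no, $a\sim b$), but crucially $a$ is non-adjacent to every $v_i$ and $b$ is non-adjacent to every $e_j$. The key point is to understand which vertices of $G$ force $S$ to meet $V$: I claim each $e_j\in H$ is the middle of a $P_3$ only via a path $a\,e_j\,v_i$ or $c\,e_j\,v_i$ with $v_i\not\in E_j$ (using the \emph{incidence} co-bipartite graph, $v_i\sim e_j$ iff $v_i\in E_j$, so $v_i\not\sim e_j$ iff $v_i\in E_j$). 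Wait — I must double-check the incidence convention here, since the construction says "incidence co-bipartite graph", so $v_i e_j$ is an edge iff $v_i\in E_j$; hence $v_i e_j$ is a \emph{non}-edge iff $v_i\notin E_j$, and a $P_3$ with middle $e_j$ and endpoints in $\{a,c\}\cup V$ requires a vertex non-adjacent to $e_j$, i.e.\ some $v_i\notin E_j$, together with $a$ or $c$. Since $e_j\in B'$ is forbidden from $S$ and $S$ must cover $e_j$, and since $S\supseteq A'\ni a,c$, covering $e_j$ forces $S$ to contain some $v_i$ with $v_i\notin E_j$... but that's the complement of what we want, so I expect the roles of $H$ and $V$ in the "incidence vs non-incidence" choice to be doing exactly the flipping needed, and I would verify carefully that in fact it is the $b$-side, not the $e_j$-side, that encodes transversality.

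Reconsidering: the genuinely forced uncovered vertices should be analogues of the $u_j$ from Section~\ref{sec:geo}. Here there are no $u_j$, so instead the forbidden set $B'\supseteq H$ means $S$ contains no $e_j$; then to be a geodetic set $S$ must cover each $e_j$, and I will show the only way is to include some $v_i\in V$ non-adjacent to $e_j$ together with $a$ or $c$ — and by the incidence convention $v_i\not\sim e_j$ means $v_i\notin E_j$, so this would make $S\cap V$ a transversal of the \emph{complement} hypergraph, not $\h$. Since the lemma asserts $S\cap V\in Tr(\h)$, the resolution must be that I have the incidence direction backwards, or that it is $b$ (complete to $V$, non-adjacent to all of $H$) whose covering forces the transversal condition. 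The cleanest route: analyze which vertex of $G$ is \emph{not} covered by $A'\setminus V = \{a,b,c\}\cup(\text{already-present }V\text{-part})$. Using $a\sim H$, $c\sim H\cup V$, $b\sim V$, one checks $\{a,b,c\}$ together cover, via $P_3$'s among themselves and with $H\cup V$, everything except possibly... I would compute this explicitly: $c$ is universal on $H\cup V$ so $c$ covers nothing as a middle unless its two neighbors are non-adjacent; the pair $a,b$ covers $c$? etc. The upshot I expect: the vertices that remain to be covered are precisely handled by choosing, for each edge $E_j$, a vertex $v_i\in E_j$ put into $S$, giving an induced $P_3$ through the appropriate one of $a,b,c$. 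Then $S\cap V$ being a transversal follows, minimality of $S$ forces minimality of $S\cap V$ (removing any $v$ would leave some $e_j$ or its gadget uncovered, exactly as in Lemma~\ref{lem:geo-tr}), and $A\subseteq S\cap V$, $B\cap(S\cap V)=\emptyset$ are immediate from $A\subseteq A'\subseteq S$, $B\subseteq B'$, $S\cap B'=\emptyset$, together with $A,B\subseteq V$.

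So the concrete steps, in order: (1) reduce "covers $v$" to "$v$ is the center of an induced $P_3$" via $\diam(G)\le 2$; (2) using $S\supseteq A'$ and $S\cap B'=\emptyset$, show $S\cap H=\emptyset$ and set $T\defeq S\cap V$ (noting $S$ can contain no other vertices than $A'\cup V$ minus $B$, since the only vertices of $G$ are $H\cup V\cup\{a,b,c\}$ and $H\subseteq B'$); (3) show that for each $j\in[m]$, the only induced $P_3$'s centered at the relevant forced-uncovered gadget vertex use a vertex $v_i$ with $v_i\in E_j$ plus one of $a,b,c$ — hence $T$ hits $E_j$, so $T\in$ transversals of $\h$; (4) from minimality of $S$, deduce minimality of $T$: if $T\setminus\{v\}$ still hit every $E_j$ it would still yield a geodetic set, contradiction; (5) conclude $A\subseteq T$ and $T\cap B=\emptyset$. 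The main obstacle I anticipate is step (3): pinning down exactly which vertex plays the role of $u_j$ and verifying there is no alternative $P_3$ covering it — in particular ruling out that $c$ (universal to $H\cup V$) or the edges inside the cliques provide unwanted coverings — and getting the incidence/non-incidence orientation of the co-bipartite graph to line up so that transversality of $\h$ (and not its complement) comes out; this is a finite but delicate case check on the $P_3$'s of $G$.
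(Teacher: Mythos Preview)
Your five–step plan is exactly the paper's argument; the only trouble is the reversed $P_3$ condition in your exploratory paragraph, which is what manufactures the phantom ``complement'' worry. For $e_j$ to be the \emph{middle} vertex of an induced $P_3$ $xe_jy$, both endpoints must be \emph{adjacent} to $e_j$ (and non-adjacent to each other), not non-adjacent as you wrote. With $S=\{a,b,c\}\cup T$ and $T\subseteq V$, run through the possible endpoint pairs in $S$: $\{a,c\}$ fails since $a\sim c$; $\{c,v\}$ fails since $c$ is complete to $V$; $\{v,v'\}$ fails since $V$ is a clique; anything involving $b$ fails since $b\not\sim e_j$; and $\{a,v\}$ works precisely when $v\sim e_j$, i.e.\ $v\in E_j$ (note $a$ is anti-complete to $V$). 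Hence covering $e_j$ forces $T\cap E_j\neq\emptyset$, so $T$ is a transversal of $\h$ itself --- the incidence convention is correct as stated, and there is no flip.

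Steps (4)--(5) then go through exactly as you sketch: if $T\setminus\{v\}$ were still a transversal, every $e_j$ would still be covered by some pair $a,v'$ with $v'\in E_j$, while all other vertices of $G$ are already covered by pairs from $\{a,b,c\}$ (the paper defers this last check to the companion lemma), contradicting minimality of $S$; and $A\subseteq T$, $T\cap B=\emptyset$ are immediate from $A\subseteq A'$, $B\subseteq B'$.
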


\begin{proof}
    We have that $S=\{a,b,c\}\cup T$ for some $T\subseteq V$ satisfying $A\subseteq T$ and $T\cap B=\emptyset$.
    Since the elements $e_j\in H$ may only be covered by pairs of the form $a,v$ for some $v\in V$ with $v\in E_j$, we conclude that $T$ is a transversal of~$\h$.
    Observe that $T$ is minimal because if $T\setminus \{v_j\}$ is a transversal of $\h$ for some $v_j\in T$, then every vertex in $G$ is still covered by the pairs of vertices in $S\setminus \{v_j\}$ (this is proved in the next lemma), which contradicts the minimality of $S$.
\end{proof}

\begin{lemma}\label{lem:app-tr-geo}
    Let $T$ be a minimal transversal of $\h$ containing $A$ and avoiding $B$.
    Then, $T\cup \{a,b,c\}$ is a minimal geodetic set of $G$ containing $A'$ and avoiding $B'$.
\end{lemma}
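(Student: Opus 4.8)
The plan is to verify, for $S \defeq T \cup \{a,b,c\}$, the three things the statement asks: that $S$ contains $A'$ and avoids $B'$, that $S$ is a geodetic set, and that $S$ is minimal. The first is immediate from the construction: $A \subseteq T \subseteq S$ and $\{a,b,c\} \subseteq S$ give $A' = A \cup \{a,b,c\} \subseteq S$, while $S \cap B' = S \cap (B \cup H) = \emptyset$ since $T$ avoids $B$, since $T \subseteq V$ is disjoint from $H$, and since none of $a,b,c$ belongs to $V(\h) \cup H$.

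For the geodetic property I would use the reformulation already recorded before the lemma: as $G$ has diameter at most $2$, a vertex is covered by a pair $\{s,t\}$ of $S$ precisely when it equals $s$ or $t$, or it is the middle vertex of an induced $P_3$ with endpoints $s$ and $t$. The vertices $a$, $b$, $c$ lie in $S$; each $v_i \in V$ is covered by the induced $P_3$ $b\,v_i\,c$, using that $b$ and $c$ are each complete to $V$ but mutually non-adjacent; and each $e_j \in H$ is covered by the induced $P_3$ $a\,e_j\,v_i$ for any $v_i \in T \cap E_j$ (which exists since $T$ is a transversal), using that $v_i \in E_j$ makes $v_i \sim e_j$ in the incidence graph, that $a$ is complete to $H$, and that $a$ is non-adjacent to $v_i$. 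I would explicitly remark that this argument uses only that $T$ is a transversal, not that it is minimal, since this is exactly the fact invoked in the proof of Lemma~\ref{lem:app-geo-tr}.

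The minimality part is where the standing assumptions enter: $m \geq 1$ guarantees $H \neq \emptyset$, and "$V(\h)$ is not a minimal transversal of $\h$" forces $T \neq V(\h) = V$, hence $V \setminus T \neq \emptyset$. For each $x \in S$ I would exhibit a vertex that $S \setminus \{x\}$ fails to cover, relying on the elementary observation that if $y \notin S \setminus \{x\}$ and $N(y) \cap (S \setminus \{x\})$ induces a clique, then $y$ cannot be the middle vertex of an induced $P_3$ with endpoints in $S \setminus \{x\}$, so $y$ is uncovered. For $x = a$, pick any $e_1 \in H$: its neighbors in $S \setminus \{a\} = T \cup \{b,c\}$ are $(T \cap E_1) \cup \{c\}$ (as $b \not\sim e_1$), which is a clique because $T \cap E_1 \subseteq V$ is and $c$ is complete to $V$. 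For $x = b$, and symmetrically $x = c$, pick $v_i \in V \setminus T$: its neighbors in $S \setminus \{b\} = T \cup \{a,c\}$ are $T \cup \{c\}$ (as $a \not\sim v_i$), again a clique, and $v_i \notin S \setminus \{b\}$ since $v_i \notin T$. For $x \in T$, minimality of $T$ yields an edge $E_j$ with $(T \setminus \{x\}) \cap E_j = \emptyset$; then the neighbors of $e_j$ in $S \setminus \{x\}$ are exactly $\{a,c\}$, which is an edge, so $e_j$ is uncovered.

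I do not anticipate a genuine obstacle here: the construction was tailored so that the $v_i$-vertices are covered only through the pair $\{b,c\}$, the $e_j$-vertices only through $a$ together with a transversal witness in $V$, and the $e_j$'s serve as the private witnesses that make deleting an element of $T$ fatal. The only care required is the routine bookkeeping of checking, in each deletion case, that the relevant neighbourhood inside $S \setminus \{x\}$ is genuinely a clique (so that no induced $P_3$ can re-cover the vertex) and that the chosen witness lies outside $S \setminus \{x\}$, together with invoking $m \geq 1$ and $V \setminus T \neq \emptyset$ at the right moments.
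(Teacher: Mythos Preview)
Your proposal is correct and follows essentially the same approach as the paper: establish the geodetic property by covering $V$ via the pair $\{b,c\}$ and $H$ via pairs $\{a,v\}$ with $v\in T$, then argue minimality by exhibiting an uncovered witness for each deletion (some $e_j$ for $a$ and for $x\in T$, some $v\in V\setminus T$ for $b$ and $c$). Your presentation is more explicit than the paper's---in particular your clique-neighborhood criterion cleanly justifies why the witness is uncovered---but the underlying argument is identical.
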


\begin{proof}
    Note that the pairs of vertices in $\{a,b,c\}$ cover every vertex in $V\cup \{a,b,c\}$.
    Now, since $T$ is a transversal, every $e_j\in H$ has a neighbor $v\in T$, and so, it is covered by $a,v$.
    Hence, $S\defeq T\cup \{a,b,c\}$ is a geodetic set.
    We argue that it is minimal.
    Note that $b$ and $c$ cannot be removed since by assumption $V$ is not a minimal transversal, and hence, there exists some $v\in V\setminus T$ that has to be covered.
    Indeed, no other $P_3$ than $bvc$ has its endpoints not in $H$ and $v$ as its middle vertex.
    Similarly, $a$ cannot be removed as otherwise the vertices in $H$ are not covered anymore.
    Suppose that $S\setminus \{v\}$ is a geodetic set for some $v\in V$.
    Then, every $e_j\in H$ is covered by a pair $a,w$ with $w\in T\setminus \{v\}$.
    We conclude that $T\setminus \{v\}$ is a transversal of $\h$, which contradicts the minimality of $T$.
\end{proof}

We conclude to the following theorem by Lemmas~\ref{lem:app-geo-tr} and \ref{lem:app-tr-geo}.

\begin{theorem}\label{thm:Ext-mingeo}
	The problem \textsc{Ext}-\mingeo{} is \NP-hard on co-bipartite graphs.
\end{theorem}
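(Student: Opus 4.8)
The plan is to read the construction of $G$, $A'$, and $B'$ preceding the statement as a polynomial-time many-one reduction from \textsc{Ext}-\transenum{}, which is \NP-hard~\cite{boros1998dual}, and then invoke Lemmas~\ref{lem:app-geo-tr} and~\ref{lem:app-tr-geo} to certify its correctness. Essentially nothing is left to do beyond assembling these ingredients, since all the structural content has been front-loaded into the two lemmas.

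Concretely, I would first deal with the preprocessing: given an arbitrary instance $(\h,A,B)$ of \textsc{Ext}-\transenum{}, one checks in polynomial time whether $n,m\geq 1$ and whether $V(\h)$ is a minimal transversal of $\h$; in these degenerate cases the instance is decided outright, so the standing assumptions cost no generality. I would then build $G$, $A'$, and $B'$ exactly as described: the result has $O(n+m)$ vertices and $O(n^2+m^2)$ edges, it is co-bipartite with bipartition $(H\cup\{a,c\},\,V\cup\{b\})$ directly from the construction, and it is produced in polynomial time in $|\h|$.

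For the equivalence of the two instances, the forward direction is immediate from Lemma~\ref{lem:app-tr-geo}: a minimal transversal $T$ of $\h$ with $A\subseteq T$ and $T\cap B=\emptyset$ yields the minimal geodetic set $T\cup\{a,b,c\}$ of $G$, which contains $A'$ and avoids $B'$. The converse is Lemma~\ref{lem:app-geo-tr}: from a minimal geodetic set $S$ of $G$ with $A'\subseteq S$ and $S\cap B'=\emptyset$ one extracts the minimal transversal $S\cap V$ of $\h$ containing $A$ and avoiding $B$. Combining the two directions with the polynomial-time construction gives the \NP-hardness of \textsc{Ext}-\mingeo{} on co-bipartite graphs. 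The only points that genuinely need checking---and where the main obstacle, such as it is, lies---are that $G$ really is co-bipartite with the claimed bipartition (which hinges on $H\cup\{a,c\}$ and $V\cup\{b\}$ being exactly the two completed cliques of the incidence co-bipartite graph of $\h$ together with $a,b,c$), and that the choice $A'=A\cup\{a,b,c\}$ with $B'=B\cup H$ is precisely strong enough to force every candidate solution into the shape $\{a,b,c\}\cup T$ with $T\subseteq V$, yet not so strong as to kill any relevant minimal transversal---which is exactly what makes the two lemmas align. I do not expect either point to pose a real difficulty: the work is in the lemmas, not in the wrap-up.
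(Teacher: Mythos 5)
Your proposal matches the paper's own proof, which simply concludes the theorem from Lemmas~\ref{lem:app-geo-tr} and~\ref{lem:app-tr-geo} together with the polynomial-time construction of the co-bipartite graph $G$ and the sets $A'$, $B'$. The additional sanity checks you flag (co-bipartiteness with bipartition $(H\cup\{a,c\},V\cup\{b\})$, and that $A'$, $B'$ force solutions into the shape $\{a,b,c\}\cup T$ with $T\subseteq V$) are exactly the structural facts already established in the lead-up and used inside those lemmas, so nothing is missing.
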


As a consequence, \mingeo{} should not admit a polynomial-delay algorithm using the classical flashlight search approach. 
We however note that this does not rule out the existence of a polynomial-delay algorithm for the problem as the extension problem is known to be hard for maximal cliques~\cite{brosse2020efficient}, yet the problem admits a polynomial-delay algorithm~\cite{TsukiyamaIAS77}.

\section{Extension for minimal resolving sets}\label{app:res}

In the following, we will show that the extension problem is \NP-hard for \minres{} in split graphs.
This may suggest that generating minimal resolving sets in that graph class is non-trivial.

Let $(\h,A,B)$ be an instance of \textsc{Ext}-\transenum{}, where $\h$ is a hypergraph on vertex set $\{v_1, \dots, v_n\}$ and edge set $\{E_1, \dots, E_m\}$ with $n,m\geq 1$, and $A$ and $B$ are two disjoint subsets of vertices.
Since adding a dummy vertex that is contained in exactly one dummy hyperedge of size $1$ simply ensures that any minimal transversal of $\h$ contains that vertex and only this dummy hyperedge is hit by this dummy vertex, we may furthermore assume that $\log(n+1)$ and $\log(m+1)$ are integers, and $E_m$ consists only of $v_n$ with $v_n\notin B$.
We describe the construction of a graph $G$ on $O(n+m)$ vertices and $O(n^2+m^2+nm)$ edges, and two sets $A',B'\subseteq V(G)$ such that there exists a minimal resolving set $S$ with $A'\subseteq S$ and $S\cap B'=\emptyset$ if and only if there exists a minimal transversal $T$ of $\h$ such that $A\subseteq T$ and $T\cap B=\emptyset$.

We start from the non-incidence bipartite graph of $\h$ with bipartition $V\defeq \{v_1,\dots,v_n\}$ and $H\defeq \{e_1,\dots,e_m\}$, to which we add a set of vertices $H'\defeq \{e'_1,\dots,e'_m\}$ that we make complete to $V$.
Moreover, we add four additional sets of vertices $U\defeq \{u_1,\dots,u_{\log(n + 1)}\}$, $U'\defeq \{u'_1,\dots,u'_{\log(n + 1)}\}$, $U^*\defeq \{u^*_1,\dots,u^*_n\}$, and $W\defeq \{w_1,\dots,w_{\log(m + 1)}\}$.
For an integer $j\in\mathbb{N}$, we shall note $I(j)$ the set of indices (starting from $1$) of bits of value $1$ in the binary representation of $j$.
We connect each $v_i$, $i\in\intv{1}{n}$, to the vertices $u'_k$ for every $k\in I(i)$, each of $e_j$ and $e'_j$, $j\in\intv{1}{m}$, to the vertices $w_k$ for every $k\in I(j)$, and each $u^*_i$, $i\in\intv{1}{n}$, to the vertices $u_k$ for every $k\in I(i)$.
Observe that, by the nature of the binary coding, no element of $V$ is anti-complete to~$U'$, and the same can be said for $H\cup H'$ and $W$, and $U^*$ and $U$.
For all $\ell\in\intv{1}{\log(n+1)}$, we also add an edge between $u_{\ell}$ and $u'_{\ell}$.
Finally, we add the necessary edges to make the vertices in $U'\cup U^*\cup H\cup H'$ into a clique.
This concludes the construction of our graph $G$.
The obtained graph is a split graph with the vertices of $U'\cup U^*\cup H\cup H'$ inducing a clique, and the vertices of $U\cup V\cup W$ inducing an independent set.
We set $A'\defeq A\cup U \cup W$ and $B'\defeq B\cup U' \cup U^* \cup H \cup H'$.

\begin{figure}
    \centering
    \includegraphics[scale=0.8]{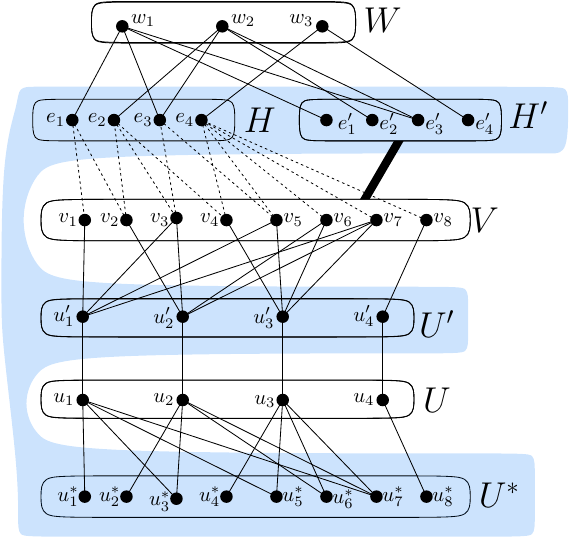}
    \caption{Illustration of the reduction from \textsc{Ext}-\transenum{} to \textsc{Ext}-\minres{} with $\h$ consisting of $E_1=\{v_1,v_2\}$, $E_2=\{v_2,v_3,v_4\}$, $E_3=\{v_3,v_5\}$, and $E_4=\{v_4,v_5,v_6,v_7,v_8\}$.
    The bold line represents the biclique between $H'$ and $V$.
    For legibility, we do not represent the edges of the cliques $H\cup H' \cup U' \cup U^*$.}
    \label{fig:MDExt}
\end{figure}

\begin{lemma}\label{lem:app-res}
Let $P$ be the set of all pairs $\{e_j,e'_j\}$ with $j\in\{1,\ldots,m-1\}$. Then, $Z=U\cup W\cup \{v_n\}$ distinguishes a pair $a,b$ of distinct vertices in $G$ if and only if $\{a,b\}\notin P$.
\end{lemma}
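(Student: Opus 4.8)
The plan is to prove both directions at once by a case analysis on the nature of the pair $\{a,b\}$, supported by a short list of distance computations in $G$. If $a\in Z$ or $b\in Z$, then the vertex of $Z$ among $a,b$ is at distance $0$ from itself and at positive distance from the other, so $Z$ trivially distinguishes $a,b$; hence it suffices to treat pairs with $a,b\notin Z$. Since $Z\supseteq U\cup W$ and $v_n\in Z$, such $a,b$ lie in the clique $C\defeq U'\cup U^*\cup H\cup H'$ or in $V^-\defeq V\setminus\{v_n\}$.

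Before the case analysis I would record the distances from the three types of vertices of $Z$. The key structural fact is that $\log(n+1)$ and $\log(m+1)$ are integers, so the binary representations of $n$ and $m$ are all-ones strings; hence $I(n)=\intv{1}{\log(n+1)}$ and $I(m)=\intv{1}{\log(m+1)}$, i.e.\ $I(n)$ (resp.\ $I(m)$) contains \emph{every} index of $U$ (resp.\ of $W$). Combining this with the construction and the assumption that $v_n$ belongs to no hyperedge other than $E_m=\{v_n\}$, one gets $N(u_\ell)=\{u'_\ell\}\cup\{u^*_i:\ell\in I(i)\}\subseteq C$, $N(w_k)=\{e_j,e'_j:k\in I(j)\}\subseteq C$, and $N(v_n)=U'\cup H'\cup\{e_1,\dots,e_{m-1}\}$. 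Since each $u_\ell,w_k$ has all its neighbors in the clique $C$, every $c\in C$ is within distance $2$ of each vertex of $U\cup W$, with $\dist(c,u_\ell)=1$ exactly when $c\in N(u_\ell)$ and $\dist(c,w_k)=1$ exactly when $c\in N(w_k)$; moreover $\dist(c,v_n)=1$ when $c\in U'\cup H'\cup\{e_1,\dots,e_{m-1}\}$ and $\dist(c,v_n)=2$ when $c\in U^*\cup\{e_m\}$. Finally, for $v_i\in V^-$ one checks $\dist(v_i,w_k)=\dist(v_i,v_n)=2$, and $\dist(v_i,u_\ell)=2$ if $\ell\in I(i)$ and $\dist(v_i,u_\ell)=3$ otherwise; since $i<n$ forces $I(i)\subsetneq\intv{1}{\log(n+1)}$, each $v_i\in V^-$ is at distance $3$ from some $u_\ell$, while no vertex of $C$ is at distance more than $2$ from $U$.

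With these facts the case analysis is routine. Two vertices $v_i,v_{i'}\in V^-$ are separated by a $u_\ell$ with $\ell$ in exactly one of $I(i),I(i')$; if exactly one of $a,b$ lies in $V^-$, it is at distance $3$ from some $u_\ell$ whereas the other, being in $C$, is at distance at most $2$. It remains to handle $a,b\in C$. Two distinct $u^*_i,u^*_{i'}$, and two distinct $u'_\ell,u'_{\ell'}$, are separated through $U$, as distinct indices give distinct adjacency patterns with $U$. A pair $u^*_i,u'_\ell$ is separated by $v_n$, since $\dist(u^*_i,v_n)=2$ but $\dist(u'_\ell,v_n)=1$, using that $U'\subseteq N(v_n)$ (a consequence of $I(n)=\intv{1}{\log(n+1)}$). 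A pair consisting of one vertex of $U^*\cup U'$ and one vertex of $H\cup H'$ is separated by a $u_\ell$ adjacent to the former (distance $1$) but not the latter (distance $2$); such a $u_\ell$ exists because $I(i)\neq\emptyset$ for $u^*_i$, and $u_\ell$ itself works for $u'_\ell$. Two vertices of $H\cup H'$ with distinct indices in $[m]$ are separated through $W$, since distinct indices have distinct binary codes. This leaves only the pairs $\{e_j,e'_j\}$: for $j=m$ it is separated by $v_n$ (distances $2$ and $1$), and for $j<m$ every vertex of $Z$ is equidistant from $e_j$ and $e'_j$ (distance $2$ to every $u_\ell$; distance $1$ or $2$ to $w_k$ according to whether $k\in I(j)$; distance $1$ to $v_n$), so $\{e_j,e'_j\}$ is not distinguished. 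Hence the pairs not distinguished by $Z$ are exactly those in $P$, which is the claim.

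I expect the only genuine subtlety to be tracking distances within $U^*\cup U'$ and, in particular, separating pairs of the form $\{u^*_{2^{\ell-1}},u'_\ell\}$, which have identical adjacency patterns with $U\cup W$: this is precisely where the assumption that $\log(n+1)$ is an integer is used, as it makes $v_n$ complete to $U'$ but anti-complete to $U^*$. Symmetrically, the assumption $E_m=\{v_n\}$ with $v_n\notin B$ is what lets $v_n$ separate $e_m$ from $e'_m$, explaining why $P$ omits the index $j=m$.
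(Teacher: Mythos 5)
Your proposal is correct and follows essentially the same approach as the paper's proof: reduce to the case $a,b\notin Z$ and then case-analyse by the type of each of $a,b$, using the binary codings to find a separating vertex of $U$ or $W$ for same-type pairs, and using $v_n$ to handle the pairs that $U\cup W$ alone cannot separate (notably $U^*$ versus $U'$, and $e_m$ versus $e'_m$). The main cosmetic difference is that you front-load explicit neighborhood and distance facts (and correctly pinpoint the role of the all-ones binary codes of $n$ and $m$ and of the dummy edge $E_m=\{v_n\}$), whereas the paper folds these computations into the individual cases; your variant also separates $(U'\cup U^*)$ from $(H\cup H')$ via $U$ rather than via $W$, which is equally valid.
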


\begin{proof}
Clearly, if one of $a$ or $b$ belongs to $Z$, then the pair is distinguished. We thus assume $a$ and $b$ to be disjoint from $Z$ in the rest of the case analysis. If $a,b\in U'$, then there exists a vertex $u\in U$ such that $\dist(u,a)=1$ and $\dist(u,b)=2$, and hence, $u$ distinguishes the pair $a,b$. If $a,b\in U^*$ ($a,b\in V$, resp.), then since $a$ and $b$ do not have the same binary encoding, there exists some vertex in $U$ that is at distance $1$ ($2$, resp.) from one of $a$ and $b$ and distance $2$ ($3$, resp.) from the other, and hence, it distinguishes the pair $a,b$. Analogously, if $a,b\in H\cup H'$ and $\{a,b\}\notin P$, then there exists some vertex in $W$ that distinguishes the pair $a,b$.

If $a\in H\cup H$ and $b\in V\cup U' \cup U^*$, then there exists a vertex $w\in W$ such that $\dist(w,a)=1$ and $\dist(w,b)=2$, and hence, $w$ distinguishes the pair $a,b$. If $a\in U^*$ and $b\in U'$, then $\dist(v_n,a)=2$ and $\dist(v_n,b)=1$, and hence, $v_n$ distinguishes the pair $a,b$. Lastly, if $a\in U^*$ and $b\in V$ or $a\in U'$ and $b\in V$, then there exists a vertex $u\in U$ such that $\dist(u,a)=1$ and $\dist(u,b)\geq 2$, and hence, $u$ distinguishes the pair $a,b$.

Finally, if $\{a,b\} \in P$, then no vertex in $Z$ can distinguish the pair $a,b$ since every vertex in $U$ is at distance $2$ from both $a$ and $b$, $\dist(v_n,a)=\dist(v_n,b)=1$, each vertex in $W$ that is adjacent to $e_j$ is also adjacent to $e'_j$ for all $j\in \{1,...,m\}$, and each vertex in $W$ is at distance at most $2$ from each vertex in $H\cup H'$.
\end{proof}

\begin{lemma}\label{lem:app-res-tr}
    If $S$ is a minimal resolving set containing $A'$ and avoiding $B'$, then $S\cap V$ is a minimal transversal of $\h$ containing $A$ and avoiding $B$.
\end{lemma}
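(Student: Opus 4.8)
The plan is to mimic the structure of the proofs of Lemmas~\ref{lem:geo-tr} and~\ref{lem:res-tr}, but now carefully tracking the prescribed sets $A'$ and $B'$. First I would observe that, since $S$ avoids $B' = B\cup U'\cup U^*\cup H\cup H'$, the set $S$ is disjoint from $U'\cup U^*\cup H\cup H'$; and since $S$ contains $A' = A\cup U\cup W$, it contains all of $U\cup W$ together with $A$. Writing $T\defeq S\cap V$, this already gives $S = U\cup W\cup T$ with $A\subseteq T$ and (because $S\cap B=\emptyset$ and $B\subseteq V$ up to the part already excluded) $T\cap B=\emptyset$. Note also that $v_n\in A'$ since by our assumption $E_m=\{v_n\}$ forces $v_n$ into every transversal containing $A$, and the dummy-vertex setup puts $v_n$ into $A$; in any case $v_n\in T$. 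Hence $Z = U\cup W\cup\{v_n\}\subseteq S$, so by Lemma~\ref{lem:app-res} the only pairs of distinct vertices of $G$ that $S$ could possibly fail to distinguish via $Z$ are the pairs $\{e_j,e'_j\}$ with $j\in\{1,\dots,m-1\}$; every such pair must therefore be distinguished by a vertex of $T\subseteq V$.

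Next I would argue that a vertex $v\in V$ distinguishes $\{e_j,e'_j\}$ if and only if $v\in E_j$: indeed $v$ is complete to $H'$ (so $\dist(v,e'_j)=1$), while in the non-incidence bipartite graph $v$ is adjacent to $e_j$ exactly when $v\notin E_j$. Since all pairs $\{e_j,e'_j\}$ for $j\le m-1$ must be distinguished, $T$ hits $E_1,\dots,E_{m-1}$; and $T$ hits $E_m=\{v_n\}$ because $v_n\in T$. Therefore $T$ is a transversal of $\h$ that contains $A$ and avoids $B$. For minimality: by the minimality of $S$, for every $v\in T$ there is some pair of vertices of $G$ distinguished by $S$ but not by $S\setminus\{v\}$; since the rest of $S$ (namely $Z\setminus\{v_n\}$ plus possibly other elements) already distinguishes everything outside $P$ by Lemma~\ref{lem:app-res}, and $v_n$ itself cannot be removed (it lies in $A'$), such a ``private'' pair must lie in $P$, i.e.\ it is some $\{e_j,e'_j\}$ with $j\le m-1$ distinguished only by $v$ among $T$; this means $v\in E_j$ and $(T\setminus\{v\})\cap E_j=\emptyset$, so $T\setminus\{v\}$ is not a transversal. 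As this holds for every $v\in T$, the transversal $T$ is minimal, completing the proof.

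The main obstacle I expect is the minimality direction, specifically handling the interaction between removable vertices of $V$ and the forced vertices $U\cup W\cup\{v_n\}$: one must be sure that the only ``reason'' a vertex $v\in T$ is needed in $S$ is to distinguish a pair in $P$ (as opposed to some pair outside $P$ that $Z$ alone already handles). This is exactly where Lemma~\ref{lem:app-res} does the heavy lifting, since it pins down precisely which pairs $Z$ fails to distinguish; I would therefore state that reduction step explicitly before invoking minimality, and double-check the edge case $j=m$ (the pair $\{e_m,e'_m\}$, which must be distinguished by something, is covered by $v_n\in Z$ since $v_n\in E_m$, so it never creates a private-pair obligation for any other vertex of $T$).
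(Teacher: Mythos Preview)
Your overall strategy is the same as the paper's: write $S=U\cup W\cup T$ with $T\subseteq V$, use Lemma~\ref{lem:app-res} to identify the undistinguished pairs, observe that $v\in V$ distinguishes $\{e_j,e'_j\}$ iff $v\in E_j$, conclude transversality, and then derive minimality by contraposition. That part is fine.

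There is, however, a genuine slip in how you get $v_n\in T$. You assert that ``the dummy-vertex setup puts $v_n$ into $A$'' and hence $v_n\in A'$; the construction does \emph{not} say this---it only guarantees $v_n\notin B$. Consequently, the sentence ``$v_n$ itself cannot be removed (it lies in $A'$)'' in your minimality argument is doubly wrong: $v_n$ need not be in $A'$, and even if it were, minimality of $S$ is with respect to the resolving-set property, not to the constraint $A'\subseteq S$; membership in $A'$ would not by itself prevent $S\setminus\{v_n\}$ from being a resolving set. The correct (and simpler) route---which is what the paper does implicitly---is to argue transversality directly for \emph{all} $j\in\{1,\dots,m\}$: the pair $\{e_j,e'_j\}$ must be distinguished by some vertex of $S=U\cup W\cup T$, no vertex of $U\cup W$ can do it (this is contained in the proof of Lemma~\ref{lem:app-res}), and a vertex $v\in V$ does it iff $v\in E_j$; hence $T$ hits every $E_j$, so in particular $v_n\in T$ since $E_m=\{v_n\}$. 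With $v_n\in T$ established this way, your invocation of Lemma~\ref{lem:app-res} and the minimality argument go through (for $v=v_n$ use that $E_m=\{v_n\}$, so $T\setminus\{v_n\}$ misses $E_m$; for $v\neq v_n$ your private-pair reasoning is correct because $Z\subseteq S\setminus\{v\}$).
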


\begin{proof}
    Since $S$ avoids $B'$, we have that $S=U\cup W\cup T$ for some $T\subseteq V$ satisfying $A\subseteq T$ and $T\cap B=\emptyset$. Since by the construction of $G$ and Lemma~\ref{lem:app-res}, among the vertices in $V(G)\setminus B'$ that may be in $S$, the pair $e_j, e'_j$ can only be distinguished by a vertex $v\in S\cap V$ with $v\in E_j$, we conclude that $T$ is a transversal of~$\h$. It is minimal as if every $E_j\in\h$ is still intersected by $T\setminus \{v_j\}$ for some $v_j\in T$, then the pair $e_j, e'_j$ is still distinguished by a vertex in $S\setminus \{v_j\}$, which contradicts the minimality of $S$ by Lemma~\ref{lem:app-res} and the fact that $v_j\neq v_n$ since $v_n$ is necessarily in any minimal transversal of $\h$.
\end{proof}

\begin{lemma}\label{lem:app-tr-res}
    If $T$ is a minimal transversal of $\h$ containing $A$ and avoiding $B$, then $T\cup U \cup W$ is a minimal resolving set of $G$ containing $A'$ and avoiding $B'$.
\end{lemma}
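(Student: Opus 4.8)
The plan is to verify, in order, the four properties claimed for $S \defeq T \cup U \cup W$: that $A' \subseteq S$, that $S \cap B' = \emptyset$, that $S$ is a resolving set of $G$, and that $S$ is inclusion-minimal with that property. The first two are immediate: since $A \subseteq T$ we have $A' = A \cup U \cup W \subseteq S$; and $T \subseteq V$ is disjoint from $U' \cup U^* \cup H \cup H'$ and from $B$ (by hypothesis), while $U$ and $W$ are disjoint from $B \subseteq V$ and from $U' \cup U^* \cup H \cup H'$, so $S \cap B' = \emptyset$.

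For the resolving property I would use Lemma~\ref{lem:app-res}. Since $E_m = \{v_n\}$, every transversal of $\h$ contains $v_n$, so $v_n \in T$ and hence $Z \defeq U \cup W \cup \{v_n\} \subseteq S$; by Lemma~\ref{lem:app-res}, $S$ distinguishes every pair not in $P$. For a pair $\{e_j,e'_j\}\in P$, I pick $v \in T \cap E_j$ (nonempty as $T$ is a transversal): by construction of the non-incidence bipartite graph $v$ is non-adjacent to $e_j$, while $v$ is adjacent to $e'_j$ since $V$ is complete to $H'$, so $\dist(v,e_j)\geq 2 > 1 = \dist(v,e'_j)$ and $v$ distinguishes the pair. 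Thus $S$ is a resolving set.

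The crux will be minimality: for every $x \in S$ I must exhibit a pair of distinct vertices distinguished by $x$ but by no other vertex of $S$. For $x = v \in T$, minimality of $T$ gives an edge $E_j$ with $(T\setminus\{v\})\cap E_j = \emptyset$, which forces $v \in E_j$ and $T\cap E_j=\{v\}$; then $v$ distinguishes $\{e_j,e'_j\}$ as above, whereas every $v'\in T\setminus\{v\}$ lies outside $E_j$ hence is adjacent to both $e_j$ and $e'_j$, each $u_\ell$ is at distance $2$ from both, and each $w_k$ is adjacent to $e_j$ iff it is adjacent to $e'_j$, so no vertex of $S\setminus\{v\}$ distinguishes $\{e_j,e'_j\}$. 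The delicate cases are the binary-gadget vertices, where naive candidates fail (e.g.\ $\{u_\ell,u'_\ell\}$ is resolved by $v_n\in T$). For $x = u_\ell \in U$ I would take $s,t\in[n]$ whose binary representations differ exactly in position $\ell$ and use $\{u^*_s,u^*_t\}$: $u_\ell$ is adjacent to exactly one of them, every other $u_{\ell'}$ is adjacent to both or to neither, and every vertex of $V\cup W$ is at distance $2$ from both (there are no $V$--$U^*$ or $W$--$U^*$ edges). Symmetrically, for $x=w_k\in W$ I would take $s,t\in[m]$ differing exactly in bit $k$ and use $\{e'_s,e'_t\}$---with both endpoints in $H'$, since every vertex of $V$ is complete to $H'$ and hence fails to distinguish such a pair, every vertex of $U$ is at distance $2$ from all of $H'$, and the other $w_{k'}$ agree on bit $k$---so $w_k$ is the unique vertex of $S$ distinguishing it.

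I expect the real obstacle to be precisely this choice of private pairs for the vertices of $U$ and $W$: they must be confined to $U^*$, respectively $H'$, otherwise the forced vertices of $T$ (and the $H$/$V$ sides of the gadgets) accidentally resolve them. A minor bookkeeping point is that the required pairs of indices exist provided $\log(n+1)$ and $\log(m+1)$ are at least $2$, which may be assumed without loss of generality by padding with further dummy vertices.
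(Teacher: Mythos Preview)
Your proof is correct and follows essentially the same approach as the paper's own proof. The paper argues minimality for vertices of $T$ by contradiction (if $S\setminus\{v\}$ were resolving then $T\setminus\{v\}$ would be a transversal) rather than by exhibiting a private pair, and it states the existence of the private pairs in $U^*$ and $H'$ for the vertices of $U$ and $W$ without spelling them out; your more explicit treatment of these cases is a faithful elaboration of the same argument.
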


\begin{proof}
    By construction, $v_n\in T$. For $P$ the set of all pairs $\{e_j,e'_j\}$ with $j\in\{1,\ldots,m-1\}$, by Lemma~\ref{lem:app-res}, $Z=U\cup W\cup \{v_n\}$ distinguishes every pair $a,b$ of distinct vertices in $G$ with $\{a,b\}\notin P$.
    Now, since $T$ is a transversal, for each $e_j\in H$, there exists a vertex $v\in V$ such that $\dist(v,e_j)=2$ and $\dist(v,e'_j)=1$, and so, $v$ distinguishes the pair $e_j,e'_j$. Hence, $S\defeq T\cup U\cup W$ is a resolving set. We argue that it is minimal. Note that, for any vertex $u\in U$ ($w\in W$, resp.), there exist two vertices in $U^*$ ($H'$, resp.) that are distinguished only by $u$ ($w$, resp.) among all the vertices in $S$ (in particular, they only differ in exactly one bit in their binary representation), and hence, no vertices from $U$ ($W$, resp.) can be removed from $S$. Suppose that $S\setminus \{v\}$ is a resolving set for some $v\in V$. Then, every pair $e_j,e'_j$ is distinguished by a vertex in $T\setminus \{v\}$ by Lemma~\ref{lem:app-res}. We conclude that $T\setminus \{v\}$ is a transversal, which contradicts the minimality of $T$.
\end{proof}

We conclude to the following theorem by Lemmas~\ref{lem:app-res-tr} and \ref{lem:app-tr-res}.

\begin{theorem}\label{thm:Ext-minres}
	The problem \textsc{Ext}-\minres{} is \NP-hard on split graphs.
\end{theorem}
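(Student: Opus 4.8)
The plan is to reduce from \textsc{Ext}-\transenum{}, which is \NP-hard~\cite{boros1998dual}, using exactly the construction set up above. Given an instance $(\h,A,B)$, I would first apply the stated preprocessing: add a dummy vertex $v_n$ forming a size-one edge $E_m=\{v_n\}$ with $v_n\notin B$ (so that $v_n$ lies in every minimal transversal of $\h$), and pad $\h$ so that $\log(n+1)$ and $\log(m+1)$ are integers. Then I would build the split graph $G$ with clique $U'\cup U^*\cup H\cup H'$ and independent set $U\cup V\cup W$, together with $A'\defeq A\cup U\cup W$ and $B'\defeq B\cup U'\cup U^*\cup H\cup H'$. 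All of this is doable in time polynomial in $n+m$, producing a graph on $O(n+m)$ vertices and $O(n^2+m^2+nm)$ edges, and the output is a split graph by construction.

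Correctness is then immediate from the two key lemmas. Lemma~\ref{lem:app-res-tr} shows that from any minimal resolving set $S$ with $A'\subseteq S$ and $S\cap B'=\emptyset$ one recovers a minimal transversal $S\cap V$ of $\h$ with $A\subseteq S\cap V$ and $(S\cap V)\cap B=\emptyset$; Lemma~\ref{lem:app-tr-res} gives the converse by mapping a minimal transversal $T$ containing $A$ and avoiding $B$ to the minimal resolving set $T\cup U\cup W$, which contains $A'$ and avoids $B'$. Hence $(\h,A,B)$ is a yes-instance of \textsc{Ext}-\transenum{} if and only if $(G,A',B')$ is a yes-instance of \textsc{Ext}-\minres{}, and since $G$ is split, \textsc{Ext}-\minres{} is \NP-hard on split graphs.

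The real work---already carried out in Lemmas~\ref{lem:app-res} through~\ref{lem:app-tr-res}---lies in designing the gadget so that two things hold. First, every resolving set avoiding $B'$ must contain all of $U\cup W$: the edges $u_\ell u'_\ell$ together with the binary-coding attachments make the vertices of $U^*$ (resp.\ $H\cup H'$) pairwise almost indistinguishable, with each such pair separated, among the admissible vertices, only by the corresponding vertex of $U$ (resp.\ $W$); so omitting a vertex of $U\cup W$ while staying outside $B'$ leaves some pair undistinguished. Second, once $U\cup W\cup\{v_n\}$ is fixed, Lemma~\ref{lem:app-res} pins down that the only undistinguished pairs are exactly $P=\{\{e_j,e'_j\}:j\le m-1\}$, and that within $V(G)\setminus B'$ the pair $\{e_j,e'_j\}$ can be distinguished only by a vertex $v_i\in V$ with $v_i\in E_j$ (such a $v_i$ is adjacent to $e'_j$ through $H'$ but non-adjacent to $e_j$ in the non-incidence graph precisely when $v_i\in E_j$). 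The dummy edge $E_m=\{v_n\}$ is what forces $v_n$ into every transversal and thereby lets us drop the $m$-th pair from $P$, making the correspondence between the $V$-parts of minimal resolving sets and minimal transversals exact. I expect the bookkeeping in this last step---verifying the claimed distances across every type of pair and that minimality transfers in both directions---to be the most delicate part, but it is routine case analysis once the gadget is in place.
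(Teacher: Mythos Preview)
Your proposal is correct and follows essentially the same approach as the paper: the theorem is stated as an immediate consequence of Lemmas~\ref{lem:app-res-tr} and~\ref{lem:app-tr-res}, together with the observation that the construction is a polynomial-time reduction producing a split graph. Your third paragraph slightly overstates what needs to be argued---$U\cup W\subseteq S$ holds by assumption since $A'\supseteq U\cup W$, and the role of $U^*$ and the binary coding is only to certify that no vertex of $U\cup W$ is redundant when proving minimality in Lemma~\ref{lem:app-tr-res}---but this does not affect the correctness of your formal argument.
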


As a consequence, \minres{} should not admit a polynomial-delay algorithm using the classical flashlight search approach.

\section{Perspectives for further research}

We investigated a number of problems related to the metric dimension that connect to problems of huge interest in algorithmic enumeration.
Except for \minstres{} that can be solved with polynomial delay on general graphs, we showed that \minres{} is equivalent to \transenum{} and that the same holds for \mingeo{} when restricted to split graphs.
Moreover, the general case of \mingeo{} may be seen as an intriguing variant of enumerating the flats of a matroid for which the complexity status is unsettled.

The results presented in this work showed that the difficulty of \minres{} and \mingeo{} is tightly related to the maximum length of an induced path in the graph at hand.
This motivates the study of these problems on $P_k$-free graphs for small values of $k$.
Except for \mingeo{} that we completely characterized with respect to \transenum{}, the case of \minres{} for $k=5$ is yet to be classified.
While it would be interesting to extend our hardness result (with respect to \transenum{}) for \minres{} to $P_5$-free graphs or even split or co-bipartite graphs, it is challenging as our construction in the proof of Theorem~\ref{thm:tr-minres} seems impossible to extend to these cases. 
We note that the case of co-bipartite graphs is also open for \mingeo{}.
For these graph classes, we were not able to devise total-polynomial-time algorithms; we however note that the extension problems for \minres{} and \mingeo{} are hard on split and co-bipartite graphs, respectively, which suggests that the generation is non-trivial in those cases.

Other open directions are to know whether \mingeo{} admits a total-quasi-poly\-nomial-time algorithm, or how it relates to problems known to be harder than \transenum{} that admit sub-exponential-time algorithms. Problems of interest include the dualization of products of posets~\cite{elbassioni2009algorithms} or the dualization in distributive lattices~\cite{elbassioni2022dualization}.

As for candidates for Question~\ref{qu:equivalence}, it is open for minimal connected dominating sets \cite{kante2014split,conte2019maximal}. 
This case was however conjectured not to be equivalent to \transenum{} by Kanté at the 2015 Lorentz Workshop on enumeration algorithms~\cite{2015Open}.

\paragraph{Acknowledgements.} This work was supported by the ANR project DISTANCIA (ANR-17-CE40-0015) and the Austrian Science Fund (FWF, project Y1329). The second author is thankful to Simon Vilmin for extensive discussions on the links between minimal geodetic sets enumeration and the enumeration of the flats of a matroid.

\footnotesize

\bibliographystyle{alpha}
\bibliography{main}

\end{document}